\begin{document}
\global\long\def\rema{\mathrm{rem}}
\global\long\def\eps{\epsilon}
\global\long\def\opt{\ensuremath{\mathrm{OPT}}}
 \global\long\def\poly{\ensuremath{\mathrm{poly}}}
 \global\long\def\idle{\ensuremath{{idle}}}

\newcommand{\types}{\ensuremath{{T}}}
\newcommand{\smalljobs}{\ensuremath{J_\mathrm{s}}}
\newcommand{\bigjobs}{\ensuremath{J_\mathrm{b}}}

\newcommand{\topmach}{\ensuremath{M_{\mathrm{vh},\ell}}}
\newcommand{\hugemach}{\ensuremath{M_{\mathrm{h},\ell}}}
\newcommand{\smallmach}{\ensuremath{M_{\mathrm{s},\ell}}}
\newcommand{\allsmallmach}{\ensuremath{M_{\mathrm{s}}}}

\defineclasses 

\title{Scheduling Unrelated Machines \\
of Few Different Types}
\author{Vincenzo Bonifaci\inst{1} \and Andreas Wiese\inst{2}} \institute{ IASI-CNR, Rome, Italy \\ \email{vincenzo.bonifaci@iasi.cnr.it} \and Sapienza University of Rome, Italy \\ \email{wiese@dis.uniroma1.it}}
\maketitle

\begin{abstract}
A very well-known machine model in scheduling allows the machines to be \emph{unrelated}, modelling jobs that might have different characteristics on each machine. Due to its generality, many optimization problems of this form are very difficult to tackle and typically \ccapx-hard. However, in many applications the number of different \emph{types} of machines, such as processor cores, GPUs, etc.~is very limited. In this paper, we address this point and study the assignment of jobs to unrelated machines in the case that each machine belongs to one of a fixed number of types and the machines of each type are identical. We present polynomial time approximation schemes (PTASs) for minimizing the makespan for multidimensional jobs with a fixed number of dimensions and for minimizing the $L_{p}$-norm. In particular, our results subsume and generalize the existing PTASs for a constant number of unrelated machines and for an arbitrary number of identical machines for these problems. We employ a number of techniques which go beyond the previously known results, including a new counting argument and a method for making the concept of sparse extreme point solutions usable for a convex program. 
\end{abstract}

\section{Introduction}
One of the most general models in machine scheduling is the model of \emph{unrelated machines}, where the characteristics of each job depend on the machine that executes it. As the term ``unrelated'' suggests, these characteristics might be completely different for each machine. This is a very general model which causes a significant increase in necessary algorithmic effort and complexity in comparison with simpler models. 
For instance, for makespan minimization there are polynomial time approximation schemes known for an arbitrary number of identical and uniform machines~\cite{Hochbaum:1987,Hochbaum:1988}, but for unrelated machines approximating the makespan within a ratio lower than $3/2$ is \ccnp-hard. 
Even more, improving upon the best known 2-approximation algorithm \cite{Lenstra:1990} has been an important open problem in scheduling for more than 20 years. 

In recent years, the design of modern hardware architectures has seen an advent of heterogeneous processors in a system, e.g., the cores of a CPU, graphics processing units or floating point units. These devices usually have very different characteristics since they are especially designed for certain operations. However, the total number of different \emph{types} of processors in a system is usually very limited. This motivates the study of unrelated machine scheduling in the setting that the given machines (arbitrarily many) are partitioned into a constant number of types. 
This setting subsumes and generalizes two classical scenarios: scheduling an arbitrary number of identical parallel machines, and scheduling a fixed number of unrelated parallel machines. We study two problems in this setting: first, the assignment of multidimensional jobs (in an application, each dimension may correspond to a scarce resource such as execution time, memory requirement, etc.). The objective is to minimize the makespan across all dimensions. 
For any $\eps>0$ and any fixed number of dimensions, we provide a $(1+\eps)$-approximation algorithm for this problem. Our second problem is the minimization of the $L_{p}$ norm of the load vector of one-dimensional jobs (instead of the $L_{\infty}$ norm). In this case we derive a PTAS for any fixed $p>1$.

\paragraph{Related work.}
Lenstra, Shmoys and Tardos~\cite{Lenstra:1990} and Shmoys and Tardos \cite{Shmoys:1993} presented a 2-approximation algorithm for makespan minimization of an arbitrary number of unrelated machines. It is known that approximating the same problem within a ratio lower than $3/2$ is \ccnp-hard \cite{Lenstra:1990}. In fact, the problem remains \ccapx-hard for any fixed $L_{p}$ norm \cite{Azar:2004}. 

Azar and Epstein \cite{Azar:2005} considered the minimization of the $L_{p}$ norm when scheduling unrelated parallel machines. They give a 2-approximation algorithm for any fixed $L_{p}$ norm ($p>1$) and a PTAS for the case of a fixed number of machines. They also give a $(D+1)$-approximation algorithm for the minimization of the $L_{p}$ norm of the generalized load vector of $D$-dimensional jobs, for any fixed $p>1$. 
The approximation ratio of the former algorithm was later improved to less than 2 by Kumar et al.~\cite{Kumar:2009}. 

For the special case of identical machines, a PTAS for makespan minimization was given by Hochbaum and Shmoys \cite{Hochbaum:1987}. 
A PTAS was later given for general $L_{p}$ norms by Alon et al.~\cite{Alon:1998}.
For the $D$-dimensional makespan minimization problem with identical machines, Chekuri and Khanna \cite{Chekuri:2004} provide a $O(\log^{2}D)$-approximation when $D$ is arbitrary, and a PTAS when $D$ is fixed.
A PTAS for makespan minimization on uniform machines has been provided by Hochbaum and Shmoys \cite{Hochbaum:1988}. 

\paragraph{Our contribution.}
In this paper we study scheduling problems on unrelated machines which are partitioned into a constant number of types, such that two machines of the same type are identical. 
In this setting, we present polynomial time approximation schemes for the problems of minimizing the makespan of $D$-dimensional jobs, for constant $D$, and for minimizing the $L_p$-norm of 1-dimensional jobs, for constant $p > 1$. 
Both results subsume and generalize the known PTASs for an arbitrary number of identical machines \cite{Alon:1998,Chekuri:2004,Hochbaum:1987} and for a constant number of unrelated machines \cite{Azar:2005,Horowitz:1976,Lenstra:1990}. Not surprisingly, certain ideas in the latter algorithms are useful in our setting as well, for example, geometric rounding and enumeration techniques.
However, obtaining our results requires non-trivial extensions of the known methods since we face obstacles that do not occur in either of the two subsumed cases. In particular, there is no direct way to enumerate the assignment of the large jobs, i.e., jobs that are longer than an $\epsilon$-fraction of the load of their respective machine in an optimal solution. 
The reason is that a job could be large on machines of one type and small on the other and in total there can be a superconstant number of large jobs.

To remedy this, in our algorithm for makespan minimization we use a linear program to assign jobs to slots which we enumerate for large jobs. For rounding this (sparse) linear program, we use an iterative rounding approach \cite{Jain:2001,Lau:2011}. We identify constraints in the LP that can be dropped without affecting the computed solution too much. To this end, we introduce a new counting argument for the number of non-integral variables, which is crucial to the approach and may be more generally applicable. This counting argument is one of the novelties of our contribution. It was also successfully used in~\cite{MarchettiRutten2012}, which evolved in parallel to the research presented here.

When minimizing the $L_p$-norm, matters are even more complicated. Unlike in the PTAS for an arbitrary number of identical machines, we cannot assume that all machines will have roughly the same load (on identical machines, this would hold after some preprocessing~\cite{Alon:1998}). However, this property is important when classifying jobs into ``large'' and ``small''.
We overcome this obstacle by identifying some properties of the optimal solution that we can enumerate in polynomial time. Moreover, due to the convexity of the objective function, we cannot use a linear program but have to employ a convex program~(CP). Unfortunately, we cannot assume that the computed CP solution has the same sparseness properties of an LP extreme point solution, which is what an iterative rounding strategy needs. To address this, we use a trick: we take the computed CP solution and formulate a linear program based on it, which we show has to be feasible. We then compute a (sparse) extreme point solution of this LP and perform an enhanced version of the iterative rounding algorithm for makespan minimization. Hence, we show how to make the concept of sparse extreme point solutions usable for a convex program, an approach that might be useful in other settings as well.


\section{Problem Definitions}
For both of the problems studied in this article, the task is to assign a set of jobs $J$ to a set of unrelated parallel machines $M$. We let $n$ and $m$ denote the number of jobs and machines, respectively. 

\paragraph{Makespan minimization of $D$-dimensional jobs.} 
The input is represented by a positive integer $c_{i,j}^{d}$ for each job $j \in J$, each machine $i \in M$, and each dimension $d\in\{1,...,D\}$. The objective is to minimize the makespan, given by $$\max_{d\in\{1,...,D\}}\max_{i \in M}\sum_{j\in J_{i}}c_{i,j}^{d},$$ where for each machine $i$ the set $J_{i}$ denotes the jobs assigned to $i$. In other words, each machine $i$ has a load of $\sum_{j\in J_{i}}c_{i,j}^{d}$ in dimension $d$, and the objective is to minimize the maximum load of all machines across all dimensions. 

We further assume that the machines are partitioned into a set $\types=\{\ell_1,\ldots,\ell_K\}$ of $K$ distinct types, where $K$ is assumed to be some constant. For two machines $i,i'$ of the same type, one has $c_{i,j}^{d}=c_{i',j}^{d}$ for each job $j$ and dimension $d$. 

\paragraph{$L_{p}$-norm minimization of one-dimensional jobs.} 
The input is represented by a positive integer $c_{i,j}$ for each job $j \in J$ and each machine $i \in M$. The objective is to minimize 
$$\left\Vert \left(\sum_{j\in J_{1}}c_{1,j},...,\sum_{j\in J_{m}}c_{m,j}\right)\right\Vert _{p}=\left(\sum_{i=1}^{m}\left(\sum_{j\in J_{i}}c_{i,j}\right)^{p}\right)^{1/p}.$$
We again assume that the machines are partitioned into a set $\types=\{\ell_1,\ldots,\ell_K\}$ of $K$ distinct types, for some constant $K$. For two machines $i,i'$ of the same type, one has $c_{i,j}=c_{i',j}$ for each job $j$.

\section{Makespan Minimization of Multidimensional Jobs}
\label{sec:Makespan-minimization}

We present a polynomial time $(1+\epsilon)$-approximation algorithm
for makespan minimization in $D$ dimensions on unrelated machines of
at most $K$ types, with $D$ and $K$ being constants.

Let $\eps>0$ and suppose that we are given an instance of our problem.
First, we establish a binary search framework to estimate the optimal
makespan. Hence, by suitable scaling, it remains to give an algorithm
which either asserts that there is no solution with makespan at most
1, or which computes a job assignment with makespan
at most $1+\epsilon$. The
general idea of this algorithm is as follows: For each machine, we
classify job into \emph{large }and \emph{small }jobs. With an enumeration
procedure, we enumerate patterns for the big jobs on the machines.
One of the enumerated patterns will correspond to an optimal solution.
Having guessed the correct pattern, the remaining problem is to assign
each job either to a slot in the pattern (then the job is big on its
machine) or to the remaining space of the machine (then it is small
on its machine). We model this problem as a linear program. Given
that the LP is feasible (otherwise we know that the enumerated pattern
was wrong), using an iterative rounding approach, we compute a solution
with makespan $1+O(D\cdot\eps)$. 

Now we present our algorithm in detail. We call a job $j$ \emph{large
on machine} $i$ if there is a dimension $d$ such that $c_{i,j}^{d}\ge\eps$.
Note that in each feasible solution, the number of large jobs on each
machine is bounded by $\floor{D/\eps}$ 
which is a constant. For technical reasons, for each large job $j$
on a machine $i$ we redefine each value $c_{i,j}^{d}$ by setting
it to $\max\{c_{i,j}^{d},\eps^{2}/D\}$. This does not increase the
makespan of any feasible solution by more than $\epsilon$, as the
following proposition shows.
\begin{proposition}
\label{prop:largify}
Let $J_i$ be a set of jobs on a machine $i$ such that $\sum_{j\in J_i}c_{i,j}^{d}\le1$
for each dimension $d$. Let $\smalljobs \subseteq J_i$ and $\bigjobs \subseteq J_{i}$ denote the small and large jobs in $J_i$, respectively. Then $\sum_{j \in \smalljobs}c_{i,j}^{d}+\sum_{j\in \bigjobs}\max\{c_{i,j}^{d},\eps^{2}/D\}\le1+\epsilon$
for each dimension $d$.
\end{proposition}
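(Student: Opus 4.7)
The plan is to bound the overshoot introduced by rounding each large job's coordinate up to $\eps^2/D$ by first controlling the number of large jobs on machine $i$, and then summing the per-job slack.

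First I would verify the cardinality bound $|\bigjobs| \le \lfloor D/\eps \rfloor$. Every job $j \in \bigjobs$ has, by definition, some dimension $d(j)$ with $c_{i,j}^{d(j)} \ge \eps$. Since $\sum_{j \in J_i} c_{i,j}^{d} \le 1$ holds in \emph{every} dimension $d$, summing over all $D$ dimensions gives $\sum_{d=1}^{D} \sum_{j \in J_i} c_{i,j}^{d} \le D$. In particular, $\sum_{j \in \bigjobs} c_{i,j}^{d(j)} \le D$, and since each summand is at least $\eps$, this forces $|\bigjobs| \le D/\eps$.

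Next, fix a dimension $d$ and estimate
\[
\sum_{j \in \smalljobs} c_{i,j}^{d} + \sum_{j \in \bigjobs} \max\{c_{i,j}^{d}, \eps^{2}/D\}
\;\le\; \sum_{j \in \smalljobs} c_{i,j}^{d} + \sum_{j \in \bigjobs} \bigl(c_{i,j}^{d} + \eps^{2}/D\bigr),
\]
since $\max\{a,b\} \le a + b$ for nonnegative $a,b$. The first two sums combine into $\sum_{j \in J_i} c_{i,j}^{d} \le 1$ by hypothesis, while the remaining term contributes at most $|\bigjobs| \cdot \eps^{2}/D \le (D/\eps)(\eps^{2}/D) = \eps$. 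Adding these gives the desired bound $1 + \eps$.

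There is no real obstacle here; the only subtlety is choosing where to spend the $\eps$-budget, and the above shows that the threshold $\eps^2/D$ is tuned precisely so that the per-job rounding error $\eps^2/D$, multiplied by the maximum possible number $D/\eps$ of large jobs on a single machine, telescopes to exactly $\eps$. I would present the proof in the order above: first the counting bound on $|\bigjobs|$, then the one-line computation in the fixed dimension $d$.
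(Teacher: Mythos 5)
Your proof is correct and follows essentially the same route as the paper's: bound the number of large jobs on $i$ by $D/\eps$ (the paper invokes the pigeonhole principle; your summing-over-dimensions argument is just a slightly more explicit version of the same counting), then observe that each large job's rounding adds at most $\eps^{2}/D$, for a total overshoot of at most $\eps$. No issues.
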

Next, we round up each input value $c_{i,j}^{d}$ to the next greater
power of $\frac{1}{1+\epsilon}$. This does not increase the objective
by more than a factor $1+\eps$. After this preparation, we enumerate
the patterns of the big jobs on each machine. Intuitively, a pattern
for a machine $i$ describes the sizes of the jobs running on $i$.
We call a vector $q=(q^{1},...,q^{D})$ a \emph{large job type }if
each $q^{d}$ is a power of $1+\epsilon$ and $\epsilon^{2}/{D}\le q^{d}\le1$.
Let $Q$ be the set of all large job types. Note that since $D$ and
$\epsilon$ are constants, $|Q|$ is also bounded by a constant. We
call a vector $\pi\in\{0,\ldots,\floor{D/\eps}\}^{Q}$
a \emph{pattern of large jobs }for a machine. For each machine type
$\ell$, we enumerate how many machines follow which pattern. With
$\kappa$ being the (constant) number of possible patterns for a machine,
there are at most $(m+1)^{K\cdot\kappa}\in\poly(m)$ combinations
for the patterns of all machines. Note that since all machines of
the same type are identical, the actual ordering of the machines of
the same type does not matter.

Assume for the ease of presentation that we correctly guessed the pattern which
corresponds to an optimal solution. For a machine $i$ with a pattern
$\pi=(\pi_{q})_{q\in Q}$, we obtain $\pi_{q}$ \emph{slots} for large
jobs of type $q$ and a certain amount of remaining capacity $\rema^{d}(i)$
in each dimension $d$. Denote by $S$ the set of all slots. It remains
to determine an assignment of the jobs to the slots and to the remaining
capacity on each machine. Of course, a job $j$ can only be assigned
to a slot $s$ on a machine $i$ if its size on $i$ corresponds to
$s$. Also, a job $j$ can only be assigned to the remaining space
on $i$ if $j$ is small on $i$ (otherwise we would have enumerated
a slot for it). We model this assignment problem with the following
linear program, denoted by Slot-LP:

\begin{align}
\mathrm{\textrm{(Slot-LP)}} &  & \sum_{i\in M}x_{i,j}+\sum_{s\in S}x_{s,j} & =1 &  & \forall j\in J\label{eq:total_assigned}\\
 &  & \sum_{j\in J}x_{s,j} & \le1 &  & \forall s\in S\label{eq:slot-constraint}\\
 &  & \sum_{j\in J}c_{i,j}^{d}\cdot x_{i,j} & \leq\rema^{d}(i) &  & \forall i\in M,\,\forall d=1,\ldots,D\label{eq:idle-constraint}\\
 &  & x_{i,j} & \ge0 &  & \forall i\in M,\,\forall j\in J\nonumber \\
 &  & x_{s,j} & \ge0 &  & \forall s\in S,\,\forall j\in J.\nonumber 
\end{align}
If Slot-LP is infeasible, then in particular there is no integral
solution and the enumerated pattern was wrong. Now assume that Slot-LP
is feasible. With an iterative rounding approach (similar to \cite{Jain:2001,Lau:2011})
we round the fractional solution. We define $LP_{0}$ to be the Slot-LP.
In each iteration $t$ we solve a linear program $LP_{t}$ which has
the same structure as the Slot-LP, but it will involve only a reduced
set of machines and jobs. Consider an iteration $t$. We compute an
extreme point solution $x^{*}$ to $LP_{t}$. We say that a job $j$
is \emph{fractionally assigned to a machine $i$} if $x_{i,j}^{*}\in(0,1)$
and it is \emph{fractionally assigned to a slot $s$} if $x_{s,j}^{*}\in(0,1)$.
Using the sparsity of extreme point solutions together with a useful
counting argument, we derive the following lemma.
\begin{lemma}
\label{lem:extreme-point-solutions}In $x^{*}$ there is either a
machine $i$ which has at most $2D$ small jobs fractionally assigned
to it, or a slot $s$ which has at most 2 jobs fractionally assigned
to it.\end{lemma}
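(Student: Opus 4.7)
The plan is to combine the standard sparsity of extreme-point solutions with a counting argument that plays the assignment equality against the other two families of constraints. Fix an extreme point $x^{*}$ of $LP_t$, let $F$ be its set of variables with value in $(0,1)$, and let $J'\subseteq J$, $M'\subseteq M$, $S_f\subseteq S$ denote the jobs, machines and slots, respectively, that touch some variable in $F$. Write $n_i$ for the number of fractional $x_{i,j}$ on a machine $i\in M'$ and $k_s$ for the number of fractional $x_{s,j}$ on a slot $s\in S_f$, so that $|F|=\sum_{i\in M'}n_i+\sum_{s\in S_f}k_s$. Note that for each $i\in M'$ the fractional $x_{i,j}$ correspond only to small jobs on $i$, since every large job on $i$ was forced into a slot of the enumerated pattern.

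First I would apply the standard BFS rank argument after fixing the integer-valued variables: the support on the reduced system is at most the rank of the tight non-negativity-free constraints in that system. The assignment equality \eqref{eq:total_assigned} of a job $j\notin J'$ reduces to $0=0$ and is discarded, leaving at most $|J'|$ constraints of type \eqref{eq:total_assigned}; a slot constraint \eqref{eq:slot-constraint} involving no variable in $F$ contributes nothing, leaving at most $|S_f|$ constraints of type \eqref{eq:slot-constraint}; a machine-dimension constraint \eqref{eq:idle-constraint} involving no variable in $F$ contributes nothing, leaving at most $D|M'|$ constraints of type \eqref{eq:idle-constraint}. Hence
\begin{equation*}
|F|\ \le\ |J'|+|S_f|+D|M'|.
\end{equation*}
Next I would use the assignment equality in the opposite direction: since integer variables take only values $0$ and $1$, no $j\in J'$ can close its equation $\sum_i x_{i,j}+\sum_s x_{s,j}=1$ using a single variable in $(0,1)$ together with $0/1$-integers. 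Thus every $j\in J'$ owns at least two fractional variables, giving $|F|\ge 2|J'|$ and so $|J'|\le |F|/2$.

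Finally I would argue by contradiction. If the lemma fails, then every machine has at least $2D+1$ fractional small-job variables (forcing $M'=M$) and every slot has at least $3$ fractional job variables (forcing $S_f=S$), so that
\begin{equation*}
|F|\ \ge\ 3|S|+(2D+1)|M|.
\end{equation*}
Plugging $|J'|\le |F|/2$ into the sparsity bound yields $|F|\le 2|S|+2D|M|$; combining the two inequalities gives $|S|+|M|\le 0$, hence $|F|=0$, contradicting the existence of a fractional variable (if $x^{*}$ is already integral, the lemma holds trivially). The main obstacle, which is the real content of the argument, is calibrating the thresholds $2D$ and $2$: these are precisely the values for which the $|J'|/2$ slack absorbed from the assignment equalities cancels the slot and machine contributions with exactly one unit to spare per fractional slot and per fractional machine, so that the bookkeeping closes with equality at zero. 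Any weakening of either threshold collapses this balance, which is why both families of constraints must be attacked simultaneously rather than separately.
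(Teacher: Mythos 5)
Your proof is correct and follows essentially the same counting argument as the paper: bound the number of fractional variables by the number of non-trivial constraints via extreme-point sparsity, observe that each fractionally assigned job owns at least two fractional variables, and derive a contradiction from the negation of the disjunction. The only difference is cosmetic bookkeeping (you restrict to the fractional support first, while the paper counts all nonzero variables against all $n + s' + Dm'$ constraints), and both versions silently skip the same degenerate case where no machine or slot constraints remain, in which case $x^*$ is already integral.
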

\begin{proof}
Denote by $I$ the number of variables which equal 1 and by $F$ the
number of fractional variables in $x^{*}$. Let $s'$ be the number
of slot constraints which are still in $LP_{t}$ and let $m'$ be
the number of machines for which there are constraints of type \eqref{eq:idle-constraint}
in $LP_{t}$. Since $x^{*}$ is an extreme point solution, we have
that $I+F\le n+s'+D\cdot m'$. Also, it holds that $n\le I+F/2$.
We claim that $F\le2\cdot s'+2m'\cdot D$. Assume on the contrary
that $F>2\cdot s'+2m'\cdot D$. But then 
\[
n\le I+F/2=(I+F)-(F/2)<(n+s'+D\cdot m')-(s'+m'\cdot D)=n
\]
 which is a contradiction.
For proving the main claim of the lemma, if each machine $i$ had
strictly more than $2D$ jobs fractionally assigned to it and each
slot had more than 2 jobs fractionally assigned to it, then $F>2\cdot s'+2m'\cdot D$
which is a contradiction. \qed 
\end{proof}

First, we fix all variables that have integral values in $x^{*}$.
If there is a machine~$i$ such that in $x^{*}$ there are at most
$2D$ small jobs fractionally assigned to it, we remove all constraints
of type~\eqref{eq:idle-constraint} for machine $i$ from the LP.
This is justified since after fixing the integral variables of $x^{*}$,
any solution for the remaining variables can violate the constraint~\eqref{eq:idle-constraint}
for $i$ by at most an additional value of $2D\cdot\eps$. The second
case of Lemma~\ref{lem:extreme-point-solutions} is that there is
a slot $s$ which has at most two jobs $j_{1},j_{2}$ fractionally
assigned to it. Intuitively, we seek an integral solution in which
either $j_{1}$ or $j_{2}$ will be assigned to $s$. We model this
by removing $j_{1}$, $j_{2}$, and $s$ from the instance and by
adding a new \emph{artificial job }$j_{0}$ with the following characteristics: 
\begin{itemize}
\item $j_{0}$ is allowed to be assigned to any slot where either $j_{1}$
or $j_{2}$ were allowed to be assigned to, 
\item for each machine $i$ on which only job $j_{1}$ but not job $j_{2}$
is small, then $c_{i,j_{2}}^{d}:=c_{i,j}^{d}$ for all dimensions
$d$, 
\item similarly, for each machine $i$ on which only job $j_{2}$ but not
job $j_{1}$ is small, then $c_{i,j_{0}}^{d}:=c_{i,j_{2}}^{d}$ for
all dimensions $d$, 
\item for each machine $i$ on which both $j_{1}$ and $j_{2}$ are small,
we define $c_{i',j_{0}}^{d}:=(x_{i,j_{1}}^{*}/(x_{i,j_{1}}^{*}+x_{i,j_{2}}^{*}))\cdot c_{i,j_{1}}^{d}+(x_{i,j_{2}}^{*}/(x_{i,j_{1}}^{*}+x_{i,j_{2}}^{*}))\cdot c_{i,j_{2}}^{d}$
for all dimensions $d$.
\end{itemize}
We say that $j_{0}$ \emph{subsumes} the jobs $j_{1}$ and $j_{2}$
and that $j_{0}$ \emph{disposes} the slot $s$. If $j_{1}$ or $j_{2}$
were already artificial, we say that $j_{0}$ also subsumes all jobs
which were subsumed by $j_{1}$ or $j_{2}$, and similarly disposes
all slots which were disposed by $j_{1}$ or $j_{2}$. Denote by $LP_{t+1}$
the resulting linear program. Propositions~\ref{pro:remove-machine} and \ref{pro:remove-slot} in the appendix
show that $LP_{t+1}$ is feasible (in fact, a solution can be constructed
from $x^{*}$). Note that by Lemma~\ref{lem:extreme-point-solutions}
we perform at least one of the two procedures above and hence, in $LP_{t+1}$
either the number of machines or the number of jobs is strictly less
than in $LP_{t}$. Hence, after at most $n+m$ iterations, we obtain
an integral vector $\bar{x}$ which assigns all artificial jobs and
all original jobs which are not subsumed by an artificial job. 
\begin{lemma}
In the vector $\bar{x}$, each slot has at most one job assigned to
it and $\sum_{j\in J}c_{i,j}^{d}\cdot\bar{x}_{i,j}\leq \rema^{d}(i)+2D\cdot\epsilon$
for each machine $i$ and each dimension $d$. 
\end{lemma}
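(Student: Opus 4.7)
The plan is to establish the two assertions of the lemma separately: first the bound on the number of jobs per slot, then the per-machine load bound.

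For the slot assertion, I would observe that constraint \eqref{eq:slot-constraint} is preserved throughout the iterative rounding: only constraints of type \eqref{eq:idle-constraint} are ever dropped, and a slot $s$ is removed from the LP only when it is disposed, in which case its capacity constraint is eliminated together with $s$ itself. Hence every slot $s$ surviving in the final LP retains its capacity constraint, so the integer solution $\bar{x}$ satisfies $\sum_{j} \bar{x}_{s,j} \le 1$, and integrality forces at most one job to be assigned to $s$.

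For the load assertion, fix a machine $i$ and a dimension $d$. If the constraint \eqref{eq:idle-constraint} for $(i,d)$ was never dropped, then it holds in the final LP and $\bar{x}$ satisfies $\sum_{j} c_{i,j}^{d} \bar{x}_{i,j} \le \rema^{d}(i)$ immediately. Otherwise, the constraint was removed at some iteration $t^{*}$. Let $x^{*}$ denote the extreme point of $LP_{t^{*}}$ and set $F = \{j : x^{*}_{i,j} \in (0,1)\}$. By the removal criterion and Lemma~\ref{lem:extreme-point-solutions}, $|F| \le 2D$, and since variables $x_{i,j}$ appear in the LP only when $j$ is small on $i$, each $j \in F$ satisfies $c_{i,j}^{d} < \epsilon$ for every dimension. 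I would then decompose the load on $i$ in $\bar{x}$ into two parts: the contribution of jobs with $x^{*}_{i,j} = 1$, which are fixed permanently to $i$ by the variable-fixing step and contribute in total at most $\rema^{d}(i)$ (since $x^{*}$ satisfied the still-present constraint at iteration $t^{*}$); and the contribution of jobs placed on $i$ through variables that were free at time $t^{*}$ or introduced later by merging. All jobs of the latter kind descend from $F$: the merging rule preserves the "small on $i$" property only when at least one parent is already small on $i$, and each merge strictly reduces the total number of jobs. Since the count of such descendants is capped at $|F| \le 2D$ and each has size below $\epsilon$, this second contribution is at most $2D \cdot \epsilon$, giving the claimed bound $\rema^{d}(i) + 2D \cdot \epsilon$.

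The step I expect to require the most care is arguing that no "new" job outside the family of descendants of $F$ can end up integrally on $i$. Concretely, I would need to invoke the feasible-solution construction underlying Propositions~\ref{pro:remove-machine} and~\ref{pro:remove-slot} to confirm that variables fixed to zero at iteration $t^{*}$ remain zero in every subsequent LP, and that any newly introduced artificial variable $x_{i,j_{0}}$ inherits zero initial value unless at least one ancestor of $j_{0}$ has positive support on $i$ in $x^{*}$. Once this invariant is in place, the counting step above closes the argument.
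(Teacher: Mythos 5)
Your proof is correct and follows exactly the line the paper intends: the paper states this lemma without an explicit proof, relying on the one-sentence justification given when a machine's constraint is dropped (at most $2D$ fractionally assigned small jobs, each of size less than $\eps$ in every dimension, can subsequently land on $i$), which is precisely your decomposition into the integrally fixed part bounded by $\rema^{d}(i)$ and the at most $2D$ surviving fractional jobs and their artificial descendants. The invariant you flag about artificial variables inheriting zero values is indeed the only point needing care, and it holds under the construction of Proposition~\ref{pro:remove-slot} together with the fact that each merge replaces two active jobs by one.
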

It remains to transform the vector $\bar{x}$ to a solution for all
real (i.e., non-artificial) jobs, rather than for some of the real jobs and the artificial jobs. One can show that
for each artificial job $j$ there is a set $J_{j}$ of real jobs which were subsumed by $j$ and a set $S_{j}$
of slots which were disposed by $j$. In particular, we can show that
for each job $j'\in J_{j}$ there is a feasible assignment of the
jobs in $J_{j}\setminus\{j'\}$ to the slots $S_{j}$. 
The following lemma proves formally the important properties of the artificial jobs.
\begin{lemma}
\label{lem:tree} For each artificial job $j$ there is
a set of original jobs $J_{j}$ and a set of slots $S_{j}$ such that 
\begin{enumerate}
\item $J_{j}\cap J_{j'}=\emptyset$ and $S_{j}\cap S_{j'}=\emptyset$ for
any two artificial jobs $j,j'$, $j\ne j'$; 
\item $c_{i,j}^{d}$ is a convex combination of $\{c_{i,j'}^{d}|j'\in J_{j}\}$
for each machine $i$ and each dimension $d$, 
\item no slot in $S_{j}$ is used by $\bar{x}$, 
\item no job in $J_{j}$ is assigned by $\bar{x}$, and 
\item for each $j'\in J_{j}$ there is a feasible assignment of the jobs
in $J_{j}\setminus\{j'\}$ to the set of slots $S_{j}$. 
\end{enumerate}
\end{lemma}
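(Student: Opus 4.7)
I would prove the lemma by induction on the order in which artificial jobs are created (equivalently, on the binary merge tree rooted at $j$). For the base case, when $j_0$ is merged from two original jobs $j_1,j_2$ that were fractionally assigned to slot $s$, set $J_{j_0}:=\{j_1,j_2\}$ and $S_{j_0}:=\{s\}$; properties~1--4 are immediate, and property~5 is immediate too, since if $j'=j_1$ one assigns $j_2$ to $s$ and vice versa (both $j_1$ and $j_2$ are compatible with $s$ by their fractional assignment in the LP). For the inductive step, when $j_0$ is created by merging $j_1,j_2$ and disposing $s$, set $J_{j_0}:=J_{j_1}\cup J_{j_2}$ and $S_{j_0}:=S_{j_1}\cup S_{j_2}\cup\{s\}$, with the convention $J_j=\{j\}$, $S_j=\emptyset$ for original $j$.

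Properties~1, 3, and 4 then follow because $s$ still belongs to the current LP at the moment $j_0$ is formed and is therefore disjoint from the slot sets attached to earlier artificial jobs, while merged jobs and disposed slots are removed at once and cannot be reused either by later merges or by $\bar{x}$. Property~2 is inherited through transitivity of convex combinations, since by the construction $c_{i,j_0}^d$ is defined (possibly trivially) as a convex combination of $c_{i,j_1}^d$ and $c_{i,j_2}^d$, which by induction are themselves convex combinations of the original $c$-values in $J_{j_1}$ and $J_{j_2}$. A routine count gives $|J_{j_0}|=|J_{j_1}|+|J_{j_2}|$ and $|S_{j_0}|=|S_{j_1}|+|S_{j_2}|+1$, so the invariant $|J_j|=|S_j|+1$ is preserved throughout, which is exactly what property~5 demands.

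The delicate point is property~5, for which I would strengthen the induction with an auxiliary claim: for every artificial job $j$ and every slot $s'$ to which $j$ is allowed to be assigned in the current LP, some original $j^{\star}\in J_j$ is compatible with $s'$. The claim is preserved under merging because, by construction, the allowed-slot set of $j_0$ is the union of those of $j_1$ and $j_2$. Given the claim, property~5 for $j_0$ follows cleanly: given $j'\in J_{j_0}$, assume without loss of generality $j'\in J_{j_1}$; the inductive hypothesis on $j_1$ feasibly assigns $J_{j_1}\setminus\{j'\}$ to $S_{j_1}$, the auxiliary claim applied to $j_2$ and the slot $s$ picks an original $j''\in J_{j_2}$ compatible with $s$, and the inductive hypothesis on $j_2$ feasibly assigns $J_{j_2}\setminus\{j''\}$ to $S_{j_2}$; gluing these with the pair $(j'',s)$ gives the required feasible assignment. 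The main obstacle will be keeping the union-based slot compatibility and the convex-combination-based processing times cleanly separated throughout the induction, so that the asymmetric merging rules are each applied in the right place and the corner cases (machines on which only one of $j_1,j_2$ is small) are handled by the explicit cases in the merge definition.
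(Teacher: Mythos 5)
Your proof is correct and follows essentially the same route as the paper's: both build the binary merge tree of subsumed jobs and disposed slots and argue property~5 by structural induction using the invariant $|J_j|=|S_j|+1$ (the paper inducts bottom-up by contracting a leaf slot, you induct top-down at the root, which is an immaterial difference). Your explicit auxiliary claim -- that for every slot allowed for an artificial job some original job in $J_j$ is compatible with it -- cleanly justifies the step the paper leaves implicit in the phrase ``whichever of $j_1,j_2$ fits,'' so the write-up is, if anything, slightly more careful than the original.
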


Using the above properties of the artificial jobs we transform $\bar{x}$ to an
integral solution for all real jobs. Let $j$ be an artificial job assigned to
some slot $s\notin S_{j}$. Then there is a job $j'\in J_{j}$ that
can be assigned to $s$ and we assign all jobs in $J_{j}\setminus\{j'\}$
to the slots $S_{j}$.
For replacing the artificial jobs which are not assigned to a slot
but to the remaining space on a machine we consider all those jobs
on a machine at the same time. 
\begin{lemma}
\label{lem:replace-artificial-jobs}
Let $i$ be a machine and let $RJ_{i}$ and $AJ_{i}$ denote all real and artificial jobs, respectively, which were assigned to the remaining space on $i$. There is an integral assignment $x'$ of the jobs in $\cup_{j\in AJ_{i}}J_{j}$ to $i$ and the slots in $\cup_{j\in AJ_{i}}S_{j}$ such that each slot gets at most one job assigned to it and $$\sum_{j\in RJ_{i}}c_{i,j}^{d}\cdot\bar{x}_{i,j}+\sum_{j\in\cup_{j'\in AJ_{i}}J_{j'}}c_{i,j}^{d}\cdot x'_{i,j}\leq \rema^{d}(i)+3D\cdot\epsilon.$$ 
\end{lemma}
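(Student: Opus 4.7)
The plan is to choose, for each artificial job $j \in AJ_i$, a single ``representative'' $\phi(j)\in J_j$ to remain assigned to $i$, while the remaining jobs in $J_j\setminus\{\phi(j)\}$ are assigned to the slots $S_j$. The latter step is feasible by property~(5) of Lemma~\ref{lem:tree}, and disjointness of the resulting job and slot sets across distinct artificial jobs follows from property~(1). Once $\phi$ is fixed, the load on $i$ in dimension~$d$ reads
\[
\sum_{j\in RJ_i}c_{i,j}^{d}\cdot\bar{x}_{i,j}+\sum_{j\in AJ_i}c_{i,\phi(j)}^{d},
\]
so the whole task reduces to choosing $\phi$ cleverly enough that this does not exceed $\rema^{d}(i)+3D\epsilon$ for every $d$.

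I would make this choice via an auxiliary linear program, with variables $y_{j,j'}\in[0,1]$ for each $j\in AJ_i$ and each $j'\in J_j$ that is small on~$i$, the equalities $\sum_{j'\in J_j} y_{j,j'}=1$ for every $j$, and the $D$ load constraints
\[
\sum_{j\in AJ_i}\sum_{j'\in J_j} c_{i,j'}^{d}\,y_{j,j'} \;\leq\; \rema^{d}(i)+2D\epsilon-\sum_{j\in RJ_i}c_{i,j}^{d}\cdot\bar{x}_{i,j}, \qquad d=1,\ldots,D.
\]
Feasibility would come directly from plugging in the convex-combination weights from Lemma~\ref{lem:tree}(2): these weights are supported on small-on-$i$ jobs, and their weighted contribution in each dimension equals $\sum_{j\in AJ_i}c_{i,j}^{d}$, so the inequality matches the load bound on $\bar{x}$ established in the previous lemma.

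Next I would compute an extreme point solution $y^{*}$ and exploit its sparsity. The LP has exactly $|AJ_i|$ equalities and $D$ load inequalities, so an extreme point has at most $|AJ_i|+D$ strictly positive components. Since each artificial job contributes at least one such component via its equality constraint, at most $D$ artificial jobs can carry more than one positive component; call these the \emph{fractional} artificial jobs. For every non-fractional $j$, $\phi(j)$ is forced; for each of the at most $D$ fractional ones, pick any $j'\in J_j$ with $y^{*}_{j,j'}>0$ and set $\phi(j)=j'$. The non-fractional jobs contribute exactly their LP value, while each fractional artificial job raises the load by at most $\epsilon$ per dimension, because every admissible $j'$ is small on $i$ and thus satisfies $c_{i,j'}^{d}<\epsilon$ for every~$d$. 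Summing gives a load of at most $\rema^{d}(i)+2D\epsilon+D\cdot\epsilon=\rema^{d}(i)+3D\epsilon$, exactly as required.

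The main obstacle is verifying that the candidate jobs $j'$ admitted by the auxiliary LP are genuinely small on~$i$, so that the final $D\epsilon$ slack actually covers the rounding error. This should follow by induction along the subsumption tree from the construction of artificial jobs: whenever an artificial job ends up assigned to the remaining space on a machine, only original jobs that are themselves small on that machine enter the corresponding convex combination. The sparsity argument itself is standard once the LP is set up; what must be tracked with care is precisely the count of $|AJ_i|+D$ positive components and hence at most $D$ fractional artificial jobs, since this is what makes the additive $D\epsilon$ slack tight against the target bound.
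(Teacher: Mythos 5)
Your proposal is correct and follows essentially the same route as the paper: the paper also sets up a per-machine auxiliary LP (its $(\textrm{Art-LP})_i$, with covering constraints $\sum_{j\in J_{j'}}x_{i,j}\ge 1$ in place of your equalities), extracts feasibility from $\bar{x}$ via the convex-combination property, and uses the $|AJ_i|+D$ bound on nonzero entries of an extreme point to conclude that at most $D$ sets $J_{j'}$ are fractionally assigned, incurring an extra $D\epsilon$ from rounding up small jobs. Your explicit restriction of the LP variables to jobs that are small on $i$, and the remark that the convex combination defining an artificial job assigned to the remaining space of $i$ is supported on such jobs, is a care point the paper leaves implicit but does not change the argument.
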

\begin{proof}
A solution to the following linear program can be extracted from $\bar{x}$:
\begin{align*}
(\textrm{Art-LP})_i \qquad \sum_{j\in J_{j'}}x_{i,j} & \ge 1 &  \qquad\forall j'\in AJ_{i}\\
 \sum_{j\in RJ_{i}}c_{i,j}^{d}\cdot \bar{x}_{i,j}+\sum_{j\in\cup_{j'\in AJ_{i}}J_{j'}}c_{i,j}^{d}\cdot x_{i,j} & \leq \rema^{d}(i)+2D\cdot\epsilon & \qquad\forall d \in \{1,\ldots,D\} \\
 x_{i,j} & \ge 0 &  \qquad \forall j\in\cup_{j'\in AJ_{i}}J_{j'}. 
\end{align*}
In an extreme point solution to $\textrm{(Art-LP)}_{i}$ there can
be at most $|AJ_i|+D$ non-zero entries. Hence, by a standard counting argument
one can show that there are at most 
$D$ sets $J_{j'}$ from which a job is fractionally assigned.
Rounding up all fractional values and assigning the other jobs in
the sets $J_{j}$ to the respective slots in $S_{j}$ yields a solution
with the claimed properties.\qed
\end{proof}
Applying the procedure of Lemma~\ref{lem:replace-artificial-jobs}
to each machine $i \in M$ yields an integral assignment of jobs with a
makespan of $1+3 D\eps$ in each dimension. Together with the
binary search framework, this yields our main theorem of this section.
\begin{theorem}
Let $D, K \in \Nat$ be constants. For any $\eps>0$ there is a $(1+\epsilon)$-approximation
algorithm for makespan minimization in $D$ dimensions on unrelated machines
of at most $K$ types.
\end{theorem}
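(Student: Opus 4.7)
The plan is to assemble the pieces developed throughout the section into a single algorithm and analyze the total approximation loss, then wrap everything in a binary search over the target makespan. First I would apply binary search on the optimal makespan and rescale so that it suffices to solve the promise version: given the threshold $1$, either certify that no schedule achieves makespan $\le 1$, or produce a schedule of makespan at most $1+\eps$.

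With the target value fixed at $1$, I would run the preprocessing steps in sequence: first apply the largification of Proposition~\ref{prop:largify}, replacing $c_{i,j}^{d}$ on large jobs by $\max\{c_{i,j}^{d},\eps^{2}/D\}$ and losing an additive $\eps$ per dimension; then round every processing value up to the nearest power of $1+\eps$, which costs a multiplicative factor of $1+\eps$. Next I would enumerate, for each machine type, how many machines use each large-job pattern. Since the number of large-job types $|Q|$ and the number of patterns per machine are constants, the total number of guesses is polynomial in $m$. For each guess I build the corresponding Slot-LP; if the LP is infeasible for every guess, I conclude that no integral schedule of makespan $\le 1$ exists and reject the threshold.

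For a guess whose Slot-LP is feasible, I would execute the iterative rounding loop: repeatedly compute an extreme point $x^{*}$, invoke Lemma~\ref{lem:extreme-point-solutions} to find either a machine with at most $2D$ fractional small jobs (and drop its capacity constraints~\eqref{eq:idle-constraint}) or a slot with at most two fractional jobs (and merge them into an artificial job that also disposes the slot). Each iteration strictly reduces either the number of machines or the number of jobs in the residual LP, so the procedure terminates in at most $n+m$ rounds and produces the integral vector $\bar{x}$ whose per-dimension load exceeds $\rema^{d}(i)$ by at most $2D\eps$. Finally I would convert $\bar{x}$ into a schedule of the original jobs by applying Lemma~\ref{lem:replace-artificial-jobs} independently on each machine, which charges at most an additional $D\eps$ and so yields total load at most $\rema^{d}(i)+3D\eps\le 1+3D\eps$ in every dimension.

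Combining the three sources of loss, the algorithm outputs a schedule with makespan at most $(1+\eps)^{2}(1+3D\eps)=1+O(D\eps)$, so to obtain the claimed $1+\eps$ guarantee I would run the whole procedure with precision parameter $\eps':=\Theta(\eps/D)$. The binary search wrapper then yields the theorem. The only subtlety I anticipate is verifying that the enumeration indeed contains the pattern induced by an optimal integral schedule of makespan $\le 1$, so that for this guess Slot-LP is feasible and iterative rounding can proceed; this follows because an optimal schedule induces, for each machine type, some distribution over patterns in $\{0,\ldots,\lfloor D/\eps\rfloor\}^{Q}$ which is by construction among the enumerated ones.
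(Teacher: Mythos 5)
Your proposal is correct and follows essentially the same route as the paper: binary search plus scaling to a promise problem, largification and geometric rounding, enumeration of per-type pattern multiplicities, the Slot-LP with iterative rounding driven by the counting lemma, and the final replacement of artificial jobs, with the cumulative $1+O(D\eps)$ loss absorbed by running the scheme at precision $\Theta(\eps/D)$. The only piece you gloss over (artificial jobs that end up assigned to a slot rather than to remaining space, handled via the tree lemma's exchange property) is a minor omission that does not change the argument.
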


\section{$L_{p}$-norm Minimization}

In this section, we present a $(1+\epsilon)$-approximation algorithm
for assigning jobs on machines with $K$ types to minimize the $L_{p}$-norm
of the loads of the machines, for $1 < p < \infty$. Note that since we work with the $L_{p}$-norm,
we assume the jobs to be 1-dimensional.

Our strategy is the following: First, we reduce the complexity of
the problem by enumerating certain structural properties of the optimal
solution. Those will include the patterns for the big jobs (like in
Section~\ref{sec:Makespan-minimization}) but also certain other
information which are only important when working with $L_{p}$-norms.
For the remaining problem we will formulate and solve a convex programming relaxation. 
Unlike for
linear programs, we cannot assume the obtained CP solution to be sparse
(like e.g., an extreme point solution). However, from the obtained
CP-solution we will derive a feasible solution to a linear program.
From this LP we obtain an extreme point solution which we can round
by enhancing the iterative rounding scheme presented in Section~\ref{sec:Makespan-minimization}.
Hence, in our algorithm we make the concept of (sparse) extreme point
solutions of LPs usable for a convex program.

Let $\epsilon>0$ and $p>1$ be constants. First, instead
of minimizing $\left\Vert g\right\Vert _{p}$, where $g$ denotes
the vector given by the loads of the machines, we minimize~
$\norm[p]{g}^{p}$.
Note that a $(1+\eps)^p$-approximation algorithm for the latter translates to
a $(1+\eps)$-approximation for the former. 
Suppose that we are given an instance of our problem. We start by
enumerating certain properties of the optimal solution. In an optimal
solution, there might be some machines which execute only one job.
Intuitively, these jobs are quite large. We call those 
machines \emph{huge}. For each type~$\ell$, we enumerate the number
of huge machines, denoted by~$h_{\ell}$. Note that since all machines
of a type are identical, it does not matter which exact machines are
huge. Hence, the total number of combinations we need to enumerate
is bounded by $m^K$. For each type $\ell$, denote by \hugemach\ 
the huge and by \smallmach\ the non-huge machines of this type.
For each type $\ell$ we enumerate the $f(p,\epsilon)$ largest huge
jobs which are processed on a machine of type $\ell$, where $f(p,\epsilon)$
is a value which we obtain from the following proposition. We call
them the \emph{very huge} jobs and the corresponding machines the \emph{very huge machines}.
\begin{proposition}
\label{prop:f(p)}For each $p > 1$ and each $\epsilon>0$ there is a
number $f(p,\epsilon)$ such that $\sum_{g\in G} g^{p}+(2\cdot\min\{g\in G\})^{p}\le(1+\epsilon)^{p}\cdot\sum_{g\in G} g^{p}$
for any set $G$ of positive reals with $|G|\ge f(p,\epsilon)$. 
\end{proposition}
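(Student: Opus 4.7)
The plan is to observe that the inequality is homogeneous in the values of $G$, so one can normalize by $m := \min\{g \in G\}$ and the statement reduces to a pure counting bound on $|G|$. More precisely, I would rearrange the target inequality as
\[
(2m)^{p} \;\le\; \bigl((1+\epsilon)^{p}-1\bigr)\cdot \sum_{g \in G} g^{p},
\]
and then lower-bound the right-hand side using the trivial fact that every term satisfies $g^{p}\ge m^{p}$, giving $\sum_{g \in G} g^{p}\ge |G|\cdot m^{p}$.

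With that lower bound substituted in, the required inequality becomes
\[
2^{p}\, m^{p} \;\le\; \bigl((1+\epsilon)^{p}-1\bigr)\cdot |G|\cdot m^{p},
\]
which, after canceling $m^{p}>0$, is equivalent to the condition $|G|\ge \dfrac{2^{p}}{(1+\epsilon)^{p}-1}$. Since $p>1$ and $\epsilon>0$ imply $(1+\epsilon)^{p}-1>0$, the right-hand side is a finite constant depending only on $p$ and $\epsilon$. Hence setting
\[
f(p,\epsilon) \;:=\; \left\lceil \frac{2^{p}}{(1+\epsilon)^{p}-1} \right\rceil
\]
yields the desired threshold, and the proposition follows.

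There is essentially no obstacle here: the argument is just homogeneity plus the trivial bound $\sum g^{p}\ge |G|\cdot (\min g)^{p}$. The only mildly delicate point is making sure $(1+\epsilon)^{p}-1>0$ so that dividing by it is legitimate, which is immediate from $p>1$ and $\epsilon>0$. I would write the proof in two or three lines, stating the normalization, applying the lower bound, and solving for $|G|$ to extract $f(p,\epsilon)$.
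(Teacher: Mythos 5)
Your proof is correct and uses essentially the same idea as the paper's: bounding $(2\min_{g\in G} g)^{p}$ by $2^{p}$ times the average $\sum_{g\in G}g^{p}/|G|$ and solving for $|G|$. If anything, your rearrangement is the more careful one — the threshold $2^{p}/((1+\epsilon)^{p}-1)$ is exactly what the argument requires, whereas the paper's stated value of $f(p,\epsilon)$ comes from a small algebra slip in the same computation.
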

Knowing the very huge jobs for each type will imply later that we
can afford making certain mistakes when assigning the remaining huge
jobs. Note that there are at most $n^{K\cdot f(p,\epsilon)}$ possibilities
to enumerate. Also, for each type $\ell$ we guess the longest job
which is scheduled on a machine of type $\ell$ and which is not huge.
Denote by $c_{\max,\ell}$ its length. There are at most $n^{K}$
possibilities for this. 
\begin{lemma}
\label{lem:load-difference}
Consider an optimal solution and let $i$ be any machine of type $\ell$ that is not huge. Then its load is at least $c_{\max,\ell}$. Moreover, the loads of any two non-huge machines of type $\ell$ differ by at most $c_{\max,\ell}$. 
\end{lemma}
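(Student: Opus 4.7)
The plan is to prove both parts by an exchange argument, exploiting the strict convexity of $x \mapsto x^p$ for $p>1$: whenever we can replace two loads $(a,b)$ by a pair $(a',b')$ with the same sum but strictly smaller spread (i.e.\ $(a',b')$ is strictly majorized by $(a,b)$), the sum $a^p+b^p$ strictly decreases, contradicting optimality. Since all machines of type~$\ell$ process each job with the same cost, moving or swapping jobs among machines of type~$\ell$ keeps the assignment feasible and only changes the contribution of those two machines to $\|g\|_p^p$.

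For the first claim, I would argue by contradiction: suppose some non-huge machine $i$ of type~$\ell$ has load $L_i<c_{\max,\ell}$. By definition there is a non-huge machine $i^{*}$ of type~$\ell$ containing a job $j^{*}$ of length $c_{\max,\ell}$, so $L_{i^{*}}\ge c_{\max,\ell}>L_i$, in particular $i\ne i^{*}$. Perform the swap that moves $j^{*}$ from $i^{*}$ to $i$ and moves every job currently on $i$ to $i^{*}$. The new loads are $c_{\max,\ell}$ and $L_{i^{*}}-c_{\max,\ell}+L_i$, with the same total $L_i+L_{i^{*}}$. A short calculation with $|2c_{\max,\ell}-L_i-L_{i^{*}}|$ shows the new pair is strictly majorized by $(L_i,L_{i^{*}})$ because $L_i<c_{\max,\ell}\le L_{i^{*}}$, so $\|g\|_p^p$ strictly decreases, contradicting optimality.

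For the second claim, suppose towards a contradiction that two non-huge machines $i,i'$ of type~$\ell$ satisfy $L_i>L_{i'}+c_{\max,\ell}$. From the first part we already know $L_i\ge c_{\max,\ell}>0$, and since $i$ is non-huge it carries at least two jobs. Pick any job $j$ assigned to $i$; by the definition of $c_{\max,\ell}$ its length $c_{i,j}$ is at most $c_{\max,\ell}$. Move $j$ from $i$ to $i'$, which is feasible since both machines have type~$\ell$. Because $c_{i,j}\le c_{\max,\ell}<L_i-L_{i'}$, both new loads $L_i-c_{i,j}$ and $L_{i'}+c_{i,j}$ lie strictly between $L_{i'}$ and $L_i$, so the new pair is strictly majorized by $(L_{i'},L_i)$ and the $p$-power sum strictly decreases, again a contradiction.

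The only slightly subtle point, and the step I would double-check carefully, is the majorization comparison in the first part, where the swap moves a whole bundle of jobs rather than a single one; the crucial inequality reduces exactly to $L_i\le c_{\max,\ell}\le L_{i^{*}}$, which is exactly what our hypothesis and the definition of $c_{\max,\ell}$ give us. One also needs to note that these local exchanges remain valid even if they turn a non-huge machine into a huge one or vice versa, because the lemma only compares the resulting $L_p$-value to that of an assumed optimum, not to any enumerated structural guess.
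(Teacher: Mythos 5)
Your proof is correct and follows essentially the same route as the paper's: an exchange argument combined with the strict convexity of $x\mapsto x^p$ (the paper phrases the strict decrease by writing the new load vector as a convex combination of the old one and its transposition, which is equivalent to your majorization step), and you additionally spell out the second part, which the paper only claims is ``proved similarly.'' The one cosmetic point is that in the first part you need $L_{i^*}>c_{\max,\ell}$ strictly rather than $\ge$ -- which does hold, since $i^*$ is non-huge and therefore carries a second job of positive length -- as otherwise the swap would merely permute the two loads and yield no strict decrease.
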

\begin{proof}
The claim follows from the strict convexity of the objective function and an exchange
argument, see Appendix~\ref{sec:appendix} for details. \qed
\end{proof}
For any machine $i$ of type $\ell$ we call a job $j$ 
\emph{huge on machine $i$} if $c_{i,j} > c_{\max,\ell}$. For each type
$\ell$ denote by $H_{\ell}$ the jobs that are huge on machines of
type $\ell$ and which are not longer than the $f(p,\epsilon)$ very
huge jobs for type $\ell$ which we guessed above. 
Finally, for each type $\ell$ we guess a value $\alpha_{\ell}\in\{1,...,n\}$
such that in the optimal solution the load of each machine of type
$\ell$ is at least $\alpha_{\ell}\cdot c_{\max,\ell}$ and at most
$(\alpha_{\ell}+2)\cdot c_{\max,\ell}$. Due to Lemma~\ref{lem:load-difference}
such a value $\alpha_{\ell}$ must exist. Note that there are at most
$n^{K}$ possibilities for $\alpha_{\ell}$. Next, we enumerate the
patterns of the big jobs on the non-huge machines of each type. To
this end, we define a job $j$ to be \emph{large} on a machine $i$
of type $\ell$ if $c_{i,j}>\epsilon\cdot\alpha_{\ell}\cdot c_{\max,\ell}$
and \emph{small} otherwise. Like in Section~\ref{sec:Makespan-minimization}
we enumerate over the (polynomial number of) patterns for each type.
From now on, assume that we know the correct values for all enumerated
quantities.

With this preparation, we formulate the remaining problem as a convex
program, which we denote as Slot-CP. Like in Section~\ref{sec:Makespan-minimization},
denote by $S$ the set of slots for the big jobs. If a job $j$ fits
into a slot $s$ then we introduce a variable $x_{s,j}$ which indicates
whether $j$ is assigned to $s$. For each combination of a job $j$
and a machine $i$ such that $j$ is small on $i$, we introduce a
variable $x_{i,j}$. Finally, if a job $j$ is huge on machines of
type $\ell$ (i.e., $c_{i,j}>c_{\max,\ell}$) then we introduce a variable
$x_{\ell,j}$ indicating whether $j$ is assigned to one of the huge
machines of type $\ell$. For each machine $i$, let $B_{i}$ denote
the total length of the large jobs on $i$. Let $\allsmallmach:=\cup_{\ell} \smallmach$
denote the set of all machines which are not huge. For each type $\ell$,
denote by $\topmach$ the very huge machines. For any very huge
machine $i\in \topmach$ we define a constant $t_{i}^{*}$ denoting
its load (due to its guessed job) and for any machine $i\in \allsmallmach$
we introduce a variable $t_{i}$ which models its load. We solve the following
convex program to an additive error of $\epsilon$. This can be done
in polynomial time since we have a separation oracle and the objective is convex and differentiable \cite{Groetschel:1988}. 
\newpage

\[
\textrm{(Slot-CP)}\quad\min\sum_{i\in \allsmallmach}(t_{i}+B_{i})^{p}+\sum_{\ell \in \types}\sum_{i\in \topmach}(t_{i}^{*})^{p}+\sum_{\ell \in \types}\sum_{j\in J}x_{\ell,j}\cdot(c_{\ell,j})^{p}
\]
\begin{align}
 &  & \sum_{i\in \allsmallmach}x_{i,j}+\sum_{s\in S}x_{s,j}+\sum_{\ell \in \types}x_{\ell,j} & =1 &  & \forall j\in J\\
 &  & \sum_{j\in H_{\ell}}x_{\ell,j} & \le h_{\ell} &  & \forall \ell \in \types \\
 &  & \sum_{j\in J}x_{s,j} & \le1 &  & \forall s\in S\nonumber \\
 &  & \sum_{j\in J}c_{i,j}\cdot x_{i,j} & \leq t_{i} &  & \forall i\in \allsmallmach \nonumber \\
 &  & \alpha_{\ell}\cdot c_{\max} & \le t_{i} &  & \forall \ell \in \types, \forall i\in \smallmach \\
 &  & x_{i,j} & \ge0 &  & \forall i\in \allsmallmach,\forall j\in J\nonumber \\
 &  & x_{s,j} & \ge0 &  & \forall s\in S,\forall j\in J\nonumber \\
 &  & x_{\ell,j} & \ge0 &  & \forall \ell \in \types, \forall j\in J \nonumber\\
 &  & t_{i} & \ge0 &  & \forall i\in \allsmallmach. \nonumber 
\end{align}
Since Slot-CP is a relaxation of the original problem, its optimal
value yields a lower bound on the optimum. Denote by $t_{i}^{*}$
the values obtained for the $t_{i}$-variables from the optimal solution
of (Slot-CP). In order to be able to use the concept of extreme point
solutions, we derive the following \emph{linear} program where all the
$t_{i}^{*}$'s are \emph{constants}.

\begin{align}
{\textrm{(Slot-LP)}} &  & \min\sum_{\ell \in \types}\sum_{j\in J} (c_{\ell,j})^{p} \cdot x_{\ell,j} \label{eq:Lp-total_assigned}\\
 &  & \sum_{i\in \allsmallmach}x_{i,j}+\sum_{s\in S}x_{s,j}+\sum_{\ell \in \types}x_{\ell,j} & =1 &  & \forall j\in J\\
 &  & \sum_{j\in H_{\ell}}x_{\ell,j} & \le h_{\ell} &  & \forall \ell \in \types \label{eq:huge-jobs}\\
 &  & \sum_{j\in J}x_{s,j} & \le1 &  & \forall s\in S\label{eq:Lp-slot-constraint}\\
 &  & \sum_{j\in J}c_{i,j}\cdot x_{i,j} & \leq t_{i}^{*} &  & \forall i\in \allsmallmach\label{eq:makespan-bound}\\
 &  & x_{i,j} & \ge0 &  & \forall i\in \allsmallmach,\forall j\in J\nonumber \\
 &  & x_{s,j} & \ge0 &  & \forall s\in S,\forall j\in J\nonumber \\
 &  & x_{\ell,j} & \ge0 &  & \forall \ell \in \types, \forall j\in J. \nonumber 
\end{align}
We devise an iterative rounding algorithm which computes an integral
solution whose overall value is only by a $(1+3\eps)^p$-factor bigger
than than the optimal value of (Slot-CP). Like in Section~\ref{sec:Makespan-minimization}
we work with linear programs $LP_{t}$ where $LP_{0}$ is the Slot-LP
and each $LP_{t}$ is obtained by taking $LP_{t-1}$ and fixing some
variables and removing some constraints. In each iteration $t$, we
compute an extreme point solution $x^{*}$ of $LP_{t}$.
\begin{lemma}
\label{lem:sparse-solutions-Lp}Let $x^{*}$ be an extreme point solution
for the linear program $LP_{t}$ for some iteration $t$. Then either 
\begin{enumerate}
\item there is a machine $i$ with at most two small jobs $j$ such that
$x_{i,j}^{*}\in(0,1)$, or
\item there is a slot $s$ with at most two jobs $j$ such that with
$x_{s,j}^{*}\in(0,1)$, or 
\item there is a type $\ell$ with at most two jobs $j\in H_{\ell}$ such
that $x_{\ell,j}^{*}\in(0,1)$.
\end{enumerate}
\end{lemma}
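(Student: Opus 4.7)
The plan is to adapt the counting argument from Lemma~\ref{lem:extreme-point-solutions} to the extended LP, which now contains the type variables $x_{\ell,j}$ together with the huge-job capacity constraints~\eqref{eq:huge-jobs}. Let $I$ denote the number of variables of $x^*$ that equal~$1$, let $F$ denote the number of strictly fractional variables of $x^*$, and write $n'$, $m'$, $s'$, $K'$ for the numbers of jobs, non-huge machine load constraints~\eqref{eq:makespan-bound}, slot constraints~\eqref{eq:Lp-slot-constraint}, and type constraints~\eqref{eq:huge-jobs} that are present in $LP_t$.

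Because $x^*$ is an extreme point, the number of positive coordinates is bounded by the number of linearly independent non-sign tight constraints, so $I+F \le n' + m' + s' + K'$. On the other hand, every job still present in $LP_t$ is either assigned integrally (contributing at least one variable to~$I$) or is split across at least two strictly fractional variables summing to~$1$; therefore $n' \le I + F/2$. Combining these two inequalities gives $F \le 2(m' + s' + K')$.

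I then close the argument by contradiction: suppose all three alternatives of the statement fail. Then every non-huge machine contributes strictly more than two to $F$ through its $x_{i,j}$-variables with $j$ small on $i$, every slot contributes strictly more than two through its $x_{s,j}$-variables, and every type contributes strictly more than two through its $x_{\ell,j}$-variables with $j\in H_\ell$. Since every fractional variable of $x^*$ is of exactly one of these three forms, summing the three strict lower bounds over all machines, slots, and types accounts for all of $F$ and yields $F > 2m' + 2s' + 2K'$, contradicting the bound above.

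The only subtlety relative to Lemma~\ref{lem:extreme-point-solutions} is the extra $K'$ term in the extreme-point bound produced by the type constraints~\eqref{eq:huge-jobs}, which is precisely what necessitates the third case of the lemma; the factor $D$ from the earlier lemma is absent because we are working in a single dimension, leaving the clean threshold ``at most two'' in cases~(1) and~(2). I do not anticipate a genuinely hard step, as the argument is a direct bookkeeping extension of the one already developed for the makespan setting.
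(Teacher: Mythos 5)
Your proof is correct and is exactly the argument the paper intends: the authors' own proof of this lemma is a one-line remark that one should repeat the counting argument of Lemma~\ref{lem:extreme-point-solutions}, and your bookkeeping (extreme-point bound $I+F\le n'+m'+s'+K'$, the inequality $n'\le I+F/2$ from the assignment constraints, hence $F\le 2(m'+s'+K')$, then pigeonhole over the three disjoint families of fractional variables) is precisely that adaptation, with the extra $K'$ term correctly accounting for the third alternative.
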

\begin{proof}
We can follow a similar argumentation as in Lemma~\ref{lem:extreme-point-solutions}, via the total number of non-integral variables.\qed
\end{proof}
First, we first fix all variables which are integral. If either case
1 or case 2 of Lemma~\ref{lem:sparse-solutions-Lp} applies we do
the same operation as in Section~\ref{sec:Makespan-minimization},
i.e., drop a constraint of type~\eqref{eq:makespan-bound} or replace
two jobs by an artificial job and drop a constraint of type~\eqref{eq:Lp-slot-constraint}.
If case 3 applies, i.e., if there is a type $\ell$ with at most two
jobs $j\in H_{\ell}$ such that $x_{\ell,j}^{*}\in(0,1)$, then we
define a schedule for the huge machines of type~$\ell$ by assigning
each integrally assigned huge job on a single (huge) machine and assign
the two fractionally assigned huge jobs together on one of the machines
of type $\ell$ which we defined to be huge. We call the latter machine
the \emph{improper} machine of type $\ell$. Then we remove the constraint~\eqref{eq:huge-jobs}
for type $\ell$.
We will show later that
the cost of the improper machine is very small in comparison with
the cost of the very huge machines of the respective type. After the
last iteration, we replace the introduced artificial jobs by the original
jobs that they subsumed (see Lemma~\ref{lem:replace-artificial-jobs}).

For each machine $i$, let $g_{i}$ denote its load in the computed
integral solution. With a similar reasoning as in Section~\ref{sec:Makespan-minimization}
we can show the following lemma.
\begin{lemma}
\label{lem:Lp-small-increase}For each small machine $i \in \smallmach$ it holds
that $g_{i}\le t_{i}^{*}+3\epsilon\cdot\alpha_{\ell}\cdot c_{\max,\ell}\le(1+3\eps)t_{i}^{*}$.
\end{lemma}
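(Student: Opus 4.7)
The plan is to mirror, in the one-dimensional $L_p$ setting, the rounding analysis that establishes the $1+O(D\eps)$ makespan bound in Section~\ref{sec:Makespan-minimization}. The key observation is that on any non-huge machine of type $\ell$, a small job has length at most $\eps\cdot\alpha_\ell\cdot c_{\max,\ell}$, so this quantity plays the role of $\eps$ from the earlier analysis; since here we work in ``dimension'' $D=1$, the $3D\eps$-slack of Lemma~\ref{lem:replace-artificial-jobs} specializes exactly to $3\cdot\eps\cdot\alpha_\ell\cdot c_{\max,\ell}$.

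More concretely, I would account for two distinct sources of extra load on a fixed machine $i\in\smallmach$ of type $\ell$. First, whenever case~1 of Lemma~\ref{lem:sparse-solutions-Lp} is invoked for $i$ and constraint~\eqref{eq:makespan-bound} is dropped, at most two variables $x_{i,j}^*$ with $j$ small on $i$ are fractional at that moment; since each such job satisfies $c_{i,j}\le\eps\cdot\alpha_\ell\cdot c_{\max,\ell}$, rounding both of them up inflates the small-job load on $i$ by at most $2\cdot\eps\cdot\alpha_\ell\cdot c_{\max,\ell}$ beyond $t_i^*$. Hence the integral vector $\bar{x}$ produced by the iterative rounding satisfies $\sum_{j\in J}c_{i,j}\,\bar{x}_{i,j}\le t_i^*+2\cdot\eps\cdot\alpha_\ell\cdot c_{\max,\ell}$. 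Second, I would invoke the direct analogue of Lemma~\ref{lem:replace-artificial-jobs} with $D=1$ to replace each artificial job assigned to $i$ by an actual subset of the original jobs it subsumes: the corresponding auxiliary linear program $(\textrm{Art-LP})_i$ has at most $|AJ_i|+1$ nonzero coordinates in an extreme point, so the same counting argument yields at most one ``subsumption class'' from which a job is fractionally assigned, and rounding that fractional variable up costs at most one further small-job's worth of load, i.e., $\eps\cdot\alpha_\ell\cdot c_{\max,\ell}$.

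Adding the two contributions yields $g_i\le t_i^*+3\cdot\eps\cdot\alpha_\ell\cdot c_{\max,\ell}$, which is the first inequality, and the second inequality is then immediate from the Slot-CP constraint $\alpha_\ell\cdot c_{\max,\ell}\le t_i^*$ that is enforced for every $i\in\smallmach$. The main obstacle I anticipate is checking that the artificial-job machinery remains sound in the $L_p$ setting: unlike in Section~\ref{sec:Makespan-minimization}, the smallness threshold is type-dependent through $\alpha_\ell\cdot c_{\max,\ell}$, so one has to verify that an artificial job $j_0$ subsuming $j_1,j_2$ is still small on every machine on which both $j_1$ and $j_2$ are small. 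This is automatic because the paper's construction sets $c_{i,j_0}$ to a convex combination of $c_{i,j_1}$ and $c_{i,j_2}$, which therefore inherits the upper bound $\eps\cdot\alpha_\ell\cdot c_{\max,\ell}$. Once this consistency check is in place, the additive calculation above gives the claimed bound.
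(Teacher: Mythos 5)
Your proposal is correct and follows exactly the argument the paper intends (the paper itself only says ``similar reasoning as in Section~\ref{sec:Makespan-minimization}''): the $2\eps\alpha_\ell c_{\max,\ell}$ slack from dropping constraint~\eqref{eq:makespan-bound} when at most two small jobs are fractional, plus one more $\eps\alpha_\ell c_{\max,\ell}$ from the $D=1$ instantiation of Lemma~\ref{lem:replace-artificial-jobs}, and the second inequality from the Slot-CP constraint $\alpha_\ell c_{\max,\ell}\le t_i$. Your consistency check that artificial jobs stay small (being convex combinations of small jobs) is a worthwhile detail the paper leaves implicit.
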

Apart from the improper machines, the cost of the huge machines does
not increase due to our rounding scheme, as the next lemma shows.
\begin{lemma}
\label{lem:Lp-huge-cost}Let $x^{*}$ be an optimal solution to the
Slot-LP and let $\ell$ be a type. The cost of its huge machines is
bounded by

\[
\sum_{i\in \hugemach} g_{i}^{p}\le\sum_{i\in \topmach}(t_{i}^{*})^{p}+(2\cdot\min\{t_{i}^{*}|i\in \topmach\})^{p}+\sum_{j \in H_\ell}x_{\ell,j}^{*}\cdot(c_{\ell,j})^{p}.
\]

\end{lemma}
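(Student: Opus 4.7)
The plan is to partition the huge machines $\hugemach$ of type $\ell$ into three disjoint groups according to what the rounding scheme did to them, and bound $\sum_{i\in\hugemach} g_{i}^{p}$ group by group. The groups are: the very huge machines $\topmach$, each executing its guessed very huge job; the single improper machine, which receives the two fractionally assigned jobs from $H_{\ell}$ when case~3 of Lemma~\ref{lem:sparse-solutions-Lp} fires for type $\ell$; and the remaining huge machines of type $\ell$, each holding at most one integrally assigned job $j\in H_{\ell}$ with $x_{\ell,j}^{*}=1$ (and otherwise idle).

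For a very huge machine $i\in\topmach$ the load is $g_{i} = t_{i}^{*}$ by construction, so this group contributes exactly $\sum_{i\in\topmach}(t_{i}^{*})^{p}$. For a machine in the third group, its load is either $0$ or equals $c_{\ell,j}$ for a unique $j\in H_{\ell}$ with $x_{\ell,j}^{*}=1$. Writing each such term as $c_{\ell,j}^{p} = x_{\ell,j}^{*}\,(c_{\ell,j})^{p}$ and summing, I upper bound this group's contribution by $\sum_{j \in H_{\ell}} x_{\ell,j}^{*} (c_{\ell,j})^{p}$, since the omitted terms are nonnegative.

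The crux is bounding the contribution of the improper machine. Here I would use the defining property of $H_{\ell}$: it excludes the $f(p,\epsilon)$ longest huge jobs for type $\ell$, which are exactly the very huge jobs producing the loads $t_{i}^{*}$ for $i\in\topmach$. Hence every $j\in H_{\ell}$ satisfies $c_{\ell,j} \le \min\{t_{i}^{*}: i\in\topmach\}$. The improper machine holds two such jobs, so its load is at most $2\min\{t_{i}^{*}: i\in\topmach\}$ and its $p$-th power is at most $(2\min\{t_{i}^{*}: i\in\topmach\})^{p}$. Summing the three group bounds yields the claim.

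The main obstacle I anticipate is fixing the right snapshot of $x^{*}$: the LP solution evolves during the iterative rounding, and the bound must refer to the values $x_{\ell,j}^{*}$ immediately before case~3 is applied to type $\ell$. At that moment the integrally fixed entries in $H_{\ell}$ have value exactly $1$, and the at most two remaining fractional entries are precisely the jobs routed to the improper machine, which is exactly what is needed to make the three group bounds combine cleanly.
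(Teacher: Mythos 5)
Your decomposition of $\hugemach$ into the very huge machines, the improper machine, and the machines holding one integrally assigned job each is exactly the paper's decomposition, and your bounds for the first and second groups are right: $g_i = t_i^*$ on $\topmach$ by construction, and since every $j \in H_\ell$ is by definition no longer than any of the guessed very huge jobs, the two jobs on the improper machine each have length at most $\min\{t_i^* \mid i \in \topmach\}$, giving the $(2\cdot\min\{t_i^*\mid i\in\topmach\})^p$ term.

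The gap is the one you yourself flag in your last paragraph, and your proposed resolution does not close it. Your bound for the third group, $\sum_{j:\,x_{\ell,j}=1}(c_{\ell,j})^p \le \sum_{j\in H_\ell}x_{\ell,j}(c_{\ell,j})^p$, is valid only for the solution $x^q$ of the iteration $q$ at which case~3 of Lemma~\ref{lem:sparse-solutions-Lp} fires for type $\ell$. The lemma, however, is stated for the optimal solution $x^*$ of the \emph{original} Slot-LP ($LP_0$), and that is the form needed in Lemma~\ref{lem:combine} to compare against the Slot-CP optimum. The set of jobs that end up integrally assigned to type $\ell$ in $x^q$ need not be the set with $x^*_{\ell,j}=1$ (indeed $x^*$ may be fractional on many jobs of $H_\ell$), so observing that ``the integrally fixed entries have value exactly $1$ at that moment'' does not yield $\sum_{j\in H_\ell}x^q_{\ell,j}(c_{\ell,j})^p \le \sum_{j\in H_\ell}x^*_{\ell,j}(c_{\ell,j})^p$. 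The missing ingredient is the monotonicity of the LP objective under the iterative-rounding operations: each operation (dropping a machine constraint, merging two jobs into an artificial job, fixing the huge-machine assignment of a type) keeps the previous optimal solution feasible for the next LP, so the optimal objective value of $LP_{q}$ never exceeds that of $LP_0$. This is Lemma~\ref{lem:objective-not-increase} in the paper (built on Propositions~\ref{pro:remove-machine} and \ref{pro:remove-slot}), and you need to invoke it (or reprove it) to pass from the snapshot $x^q$ back to $x^*$.
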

Finally, we show that the cost of the improper machines is small in
comparison to the cost of the very huge machines. This follows from
Proposition~\ref{prop:f(p)} and Lemma~\ref{lem:Lp-huge-cost}.
\begin{lemma}
\label{lem:Lp-improper-machines}Let $\ell$ be a type. Then $\sum_{i\in \topmach}(t_{i}^{*})^{p}+(2\cdot\min\{t_{i}^{*}|i\in \topmach\})^{p}\le(1+\epsilon)^{p}\cdot\sum_{i\in \topmach}(t_{i}^{*})^{p}.$
\end{lemma}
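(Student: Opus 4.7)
The plan is to apply Proposition~\ref{prop:f(p)} directly with $G := \{t_i^* : i \in \topmach\}$, since the inequality claimed in Lemma~\ref{lem:Lp-improper-machines} is identical in form to the conclusion of that proposition once $G$ is substituted.

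First I would verify that $G$ satisfies the hypotheses of the proposition. For positivity, every $i \in \topmach$ is, by construction, a machine hosting one of the $f(p,\epsilon)$ largest huge jobs guessed for type~$\ell$, so $t_i^*$ equals that job's (positive) processing time on a type-$\ell$ machine. For cardinality, the very huge machines were defined as exactly the $f(p,\epsilon)$ machines in $\hugemach$ corresponding to these largest huge jobs, so $|G| = |\topmach| = f(p,\epsilon)$, matching the lower bound required by Proposition~\ref{prop:f(p)}. Invoking the proposition immediately gives
$$\sum_{g \in G} g^p + (2 \cdot \min\{g \in G\})^p \le (1+\epsilon)^p \cdot \sum_{g \in G} g^p,$$
which, after unpacking the definition of $G$, is exactly the claim of the lemma.

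The only subtlety I would flag — more as bookkeeping than as a genuine obstacle — is the degenerate case in which type~$\ell$ has strictly fewer than $f(p,\epsilon)$ huge machines in the optimal solution. There the cardinality hypothesis of Proposition~\ref{prop:f(p)} would fail, but in that regime every huge job of type~$\ell$ has already been pinned by the enumeration step, so $H_\ell = \emptyset$; consequently the variables $x_{\ell,j}$ never appear non-trivially, case~3 of Lemma~\ref{lem:sparse-solutions-Lp} never triggers, and no improper machine of type~$\ell$ is ever created. Hence the inequality of Lemma~\ref{lem:Lp-improper-machines} is needed only in the generic regime $|\topmach| = f(p,\epsilon)$, where the one-line application of Proposition~\ref{prop:f(p)} above settles it; the hard work has been front-loaded into that proposition.
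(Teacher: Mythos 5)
Your proof is correct and matches the paper's argument: the paper likewise derives Lemma~\ref{lem:Lp-improper-machines} by applying Proposition~\ref{prop:f(p)} to the multiset $\{t_i^* \mid i \in \topmach\}$, whose cardinality is $f(p,\epsilon)$ by the definition of the very huge machines. Your remark about the degenerate case with fewer than $f(p,\epsilon)$ huge machines is sensible bookkeeping that the paper leaves implicit.
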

Using Lemmas~\ref{lem:Lp-small-increase}, \ref{lem:Lp-huge-cost},
and \ref{lem:Lp-improper-machines} one can show that the total cost
of the final solution is at most by a factor $(1+3\eps)^p$ larger than
the total cost of the optimal solution of Slot-CP (see Lemma \ref{lem:combine} in the Appendix). This yields our
main theorem.
\begin{theorem}
Let $K\in\mathbb{N}$ and $p > 1$ be fixed. For any
$\epsilon>0$ there is a $(1+\epsilon)$-approximation algorithm for
assigning jobs to unrelated machines of $K$ types to minimize the
$L_{p}$-norm of the load vector. 
\end{theorem}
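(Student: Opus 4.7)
The plan is to assemble the enumeration, CP relaxation, LP rounding and per-machine bounds developed throughout this section into a single $(1+\eps)$-approximation algorithm, and then verify the overall approximation ratio by combining Lemmas~\ref{lem:Lp-small-increase}, \ref{lem:Lp-huge-cost} and \ref{lem:Lp-improper-machines}. Since we actually minimize the surrogate objective $\sum_i g_i^p$, it suffices to guarantee a multiplicative $(1+3\eps)^p$-bound on the surrogate, which after taking $p$-th roots becomes a $(1+3\eps)$-bound on $\|g\|_p$; rescaling $\eps \leftarrow \eps/3$ at the very end yields the claimed $(1+\eps)$-approximation.

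For the running time I would note that the total number of enumerated configurations (values of $h_\ell$, the $f(p,\eps)$ very huge jobs per type, $c_{\max,\ell}$, $\alpha_\ell$, and the pattern of large jobs on each machine type) is polynomial in $n$ and $m$, with exponent depending only on the constants $K$, $p$ and $\eps$. For each configuration, (Slot-CP) can be solved to additive error $\eps$ in polynomial time via the ellipsoid method, as the objective is convex and differentiable and a polynomial separation oracle is available; the iterative rounding of (Slot-LP) terminates in $O(n+m)$ rounds, because each application of Lemma~\ref{lem:sparse-solutions-Lp} strictly decreases either the number of jobs, the number of machines, or the number of type-constraints present in the current LP.

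For correctness, the algorithm returns the best solution found across all configurations. In the branch where every guessed quantity matches a fixed optimal solution, (Slot-CP) is a relaxation and so its optimum is at most $\opt^p$. Lemma~\ref{lem:Lp-small-increase} yields $g_i^p \le (1+3\eps)^p (t_i^*)^p$ for every $i \in \allsmallmach$. For each type $\ell$, chaining Lemma~\ref{lem:Lp-huge-cost} with Lemma~\ref{lem:Lp-improper-machines} bounds the cost of the huge machines of type $\ell$ by $(1+\eps)^p \sum_{i \in \topmach} (t_i^*)^p + \sum_{j \in H_\ell} x_{\ell,j}^* (c_{\ell,j})^p$. Summing over all small machines and all types, the integral solution's surrogate value is at most $(1+3\eps)^p$ times the (Slot-CP) optimum, hence at most $(1+3\eps)^p \cdot \opt^p$, which is what we need.

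The main obstacle I expect is the bookkeeping in the artificial-job replacement step (the $L_p$-analogue of Lemma~\ref{lem:replace-artificial-jobs}): one must verify that replacing artificial jobs by the real jobs they subsumed does not inflate any individual machine's load beyond what Lemmas~\ref{lem:Lp-small-increase} and \ref{lem:Lp-huge-cost} already account for, and in particular that the convex-combination definition of artificial-job sizes interacts well with the $p$-th-power objective via convexity of $x \mapsto x^p$. The paper defers this verification to Lemma~\ref{lem:combine} in the appendix; modulo that combining lemma, the theorem follows directly from the summation above and the final rescaling of $\eps$.
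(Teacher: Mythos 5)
Your proposal is correct and follows essentially the same route as the paper: it assembles the enumeration, the (Slot-CP) relaxation, the extreme-point rounding of (Slot-LP), and then combines Lemmas~\ref{lem:Lp-small-increase}, \ref{lem:Lp-huge-cost} and \ref{lem:Lp-improper-machines} exactly as the paper's Lemma~\ref{lem:combine} does, finishing with the $p$-th-root and $\eps$-rescaling step. The running-time accounting and the identification of the artificial-job replacement as the remaining bookkeeping also match the paper's treatment.
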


\bibliography{references}
\bibliographystyle{abbrv}

\newpage{}

\appendix

\section{Appendix}\label{sec:appendix}

\begin{proof}[Proposition \ref{prop:largify}]
By the pigeonhole principle there can be at most $D/\epsilon$ large jobs in $J_i$. Thus, 
$$\sum_{j\in \bigjobs}\max\{c_{i,j}^{d},\eps^2/D\}-\sum_{j\in \bigjobs}c_{i,j}^{d} \le (D/\eps) \cdot \eps^2/D = \eps$$
and thus $\sum_{j\in \smalljobs}c_{i,j}^{d}+\sum_{j\in \bigjobs}\max\{c_{i,j}^{d},\eps^2/D \} \le1+\epsilon$ for each $d$.
\qed
\end{proof}

\begin{proposition}\label{pro:remove-machine}
After removing a machine $i$ and the variable $x_{i,j}$ for each job $j$, the new linear
program $LP_{t+1}$ is feasible. \end{proposition}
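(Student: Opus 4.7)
The plan is to exhibit an explicit feasible point for $LP_{t+1}$ by projecting the extreme point $x^*$ of $LP_t$ onto the variables that survive. First I would pin down what $LP_{t+1}$ actually looks like: it is obtained from $LP_t$ by dropping the load constraints of type~\eqref{eq:idle-constraint} for machine $i$ and by eliminating the variables $x_{i,j}$ (for every job $j$) from the formulation, which means in particular that the assignment constraint~\eqref{eq:total_assigned} for each job $j$ now reads $\sum_{i'\ne i} x_{i',j}+\sum_{s\in S} x_{s,j} = 1-x_{i,j}^{*}$, with the eliminated variable contributing the known constant $x_{i,j}^{*}$ to the right-hand side.

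Given this reading, I would define the candidate vector $y$ by setting $y_{i',j}:=x_{i',j}^{*}$ for every $i'\ne i$ and every job $j$, and $y_{s,j}:=x_{s,j}^{*}$ for every slot $s\in S$ and every job $j$; in other words, $y$ is just the restriction of $x^*$ to the coordinates that remain. I would then check the three families of surviving constraints one by one. For the modified assignment constraint, subtracting $x_{i,j}^{*}$ from both sides of the original equation $\sum_{i'\in M} x_{i',j}^{*}+\sum_{s\in S} x_{s,j}^{*}=1$ yields precisely the required identity for $y$. The slot constraints~\eqref{eq:slot-constraint} do not reference any removed variable and are inherited unchanged from feasibility of $x^{*}$. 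The load constraints~\eqref{eq:idle-constraint} for machines $i'\ne i$ involve only the variables $x_{i',j}$, which keep their $x^*$ values under $y$, so they remain satisfied. Nonnegativity is immediate.

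The only subtlety worth flagging is the need to ensure that no constraint retained in $LP_{t+1}$ refers to a variable that was removed; this is precisely why we simultaneously drop the load constraints for machine $i$ (they are the only ones other than~\eqref{eq:total_assigned} that mention $x_{i,j}$), and why the assignment constraint for each job $j$ must have its right-hand side adjusted by $x_{i,j}^{*}$. Once this bookkeeping is in place, feasibility of $y$, and hence of $LP_{t+1}$, follows directly from the feasibility of $x^{*}$ in $LP_t$, with no further argument needed.
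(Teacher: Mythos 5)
Your core move --- exhibiting a feasible point of $LP_{t+1}$ by projecting the extreme point $x^{*}$ of $LP_{t}$ onto the surviving coordinates --- is the same as the paper's, and your verification of the surviving constraints \eqref{eq:slot-constraint} and \eqref{eq:idle-constraint} matches. The two arguments part ways on how they treat a job $j$ with $x_{i,j}^{*}>0$. You freeze the eliminated variable at its fractional value and rewrite \eqref{eq:total_assigned} as $\sum_{i'\neq i}x_{i',j}+\sum_{s\in S}x_{s,j}=1-x_{i,j}^{*}$. The paper instead commits every such job integrally to machine $i$, removes it (and hence its equality constraint \eqref{eq:total_assigned}) from the program together with the machine, and sets its remaining variables $x_{i',j}^{*}$, $x_{s,j}^{*}$ to zero; feasibility is then immediate because decreasing variables can only slacken the surviving inequality constraints, and no equality constraint for a removed job survives to be violated.

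The distinction is not just bookkeeping. Under your reading, a job with $x_{i,j}^{*}=0.3$ carries a residual demand of $0.7$ through all later iterations and ends up permanently split between machine $i$ and the rest of the instance, which is incompatible with the algorithm's requirement that the final vector $\bar{x}$ be integral. The algorithm instead rounds the (at most $2D$, all small on $i$) fractionally assigned jobs up onto $i$, and this is exactly why dropping the constraints \eqref{eq:idle-constraint} for $i$ costs only an additive $2D\cdot\eps$ per dimension. Admittedly the proposition's phrasing ("removing the variable $x_{i,j}$") invites your interpretation, but the LP whose feasibility you establish is not the $LP_{t+1}$ the algorithm actually constructs. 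To align with the paper, delete the jobs assigned to $i$ along with the machine and zero out their leftover variables, rather than rescaling the right-hand side of \eqref{eq:total_assigned}.
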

\begin{proof}
Let $j_{1}$ be a removed job. We decrease $x_{i',j_{1}}^{*}$ and
$x_{s,j_{1}}^{*}$ to zero for any remaining machine $i'$ or slot
$s$. This cannot violate any residual constraint. \qed\end{proof}
\begin{proposition}\label{pro:remove-slot}
Let $s$ be a slot and suppose that in $LP_{t}$ for any job $j$ with $j\ne j_1$ and $j \ne j_2$
it holds that $x_{s,j}\in \{0,1\}$.
After replacing $j_{1}$ and $j_{2}$ by the artificial job $j_{0}$
and removing $s$, the new linear program $LP_{t+1}$ is feasible. \end{proposition}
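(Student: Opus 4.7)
The plan is to explicitly construct a feasible solution $\tilde{x}$ of $LP_{t+1}$ from the extreme point solution $x^*$ of $LP_{t}$, having the artificial job $j_0$ inherit the fractional mass of $j_1$ and $j_2$. For every variable involving none of $j_0, j_1, j_2, s$, I keep the value from $x^*$. For the new job $j_0$, I set $\tilde{x}_{s', j_0} := x^*_{s', j_1} + x^*_{s', j_2}$ for each surviving slot $s' \neq s$ on which $j_0$ is admissible, and $\tilde{x}_{i, j_0} := x^*_{i, j_1} + x^*_{i, j_2}$ for each machine $i$.

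Checking the slot constraint \eqref{eq:slot-constraint} for a surviving slot $s' \neq s$ is immediate: the new sum $\sum_{j} \tilde{x}_{s', j}$ equals the old $\sum_{j} x^*_{s', j} \le 1$. For the load constraint \eqref{eq:idle-constraint} on a machine $i$ and a dimension $d$, the key inequality is $c_{i, j_0}^{d} \cdot (x^*_{i, j_1} + x^*_{i, j_2}) \le c_{i, j_1}^{d}\, x^*_{i, j_1} + c_{i, j_2}^{d}\, x^*_{i, j_2}$, which I would verify case-by-case using the piecewise definition of $c_{i, j_0}^{d}$. When both $j_1, j_2$ are small on $i$ the convex-combination definition yields equality; when only one of them is small on $i$ the other's $x^*_{i, \cdot}$-value must be $0$ (large jobs never use the remaining capacity), so equality again holds; and when neither is small, both $x^*_{i, j_1}$ and $x^*_{i, j_2}$ are $0$ and the inequality is trivial.

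The delicate point is the total-assignment equality \eqref{eq:total_assigned} for $j_0$. Summing the definitions above gives $\sum_{s' \neq s} \tilde{x}_{s', j_0} + \sum_{i} \tilde{x}_{i, j_0} = 2 - (x^*_{s, j_1} + x^*_{s, j_2})$. By hypothesis every $j \neq j_1, j_2$ has $x^*_{s, j} \in \{0, 1\}$; since $x^*_{s, j_1}$ and $x^*_{s, j_2}$ are strictly positive, the slot-$s$ capacity forces all of these $x^*_{s, j}$ to be $0$, hence $x^*_{s, j_1} + x^*_{s, j_2} \le 1$ and the computed total for $j_0$ is at least $1$. To force equality I would scale the entries $\tilde{x}_{\cdot, j_0}$ by the factor $1/(2 - (x^*_{s, j_1} + x^*_{s, j_2})) \in (0, 1]$; this scaling preserves non-negativity and can only decrease the left-hand sides of the slot and load inequalities, so all the already-verified bounds remain valid. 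This absorption of the "leftover" mass left behind by the removal of $s$ is really the only obstacle, and the integrality hypothesis on the other $x^*_{s, j}$ values is precisely what controls the overshoot to at most $1$.
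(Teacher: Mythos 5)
Your construction is the same as the paper's: merge the mass of $j_1,j_2$ into $j_0$ on every surviving machine and slot, verify the slot and load constraints via the (convex-combination) definition of $c_{i,j_0}^d$, and rescale $j_0$'s variables downward to restore the assignment equality. You make explicit a point the paper leaves implicit — that the total mass $2-(x^*_{s,j_1}+x^*_{s,j_2})$ is at least $1$ by the slot-$s$ capacity, so the rescaling is indeed a scale-down — but this is an elaboration, not a different argument.
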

\begin{proof}
If $x^{*}$ is feasible for $LP_{t}$, consider the solution $\hat{x}$
defined as: 
\begin{align*}
\hat{x}_{i,j} & :=\begin{cases}
x_{i,j}^{*}, & j\neq j_{0}\\
x_{i,j_{1}}^{*}+x_{i,j_{2}}^{*} & j=j_{0},
\end{cases}\\
\hat{x}_{s,j} & :=\begin{cases}
x_{s,j}^{*}, & j\neq j_{0}\\
x_{s,j_{1}}^{*}+x_{s,j_{2}}^{*}, & j=j_{0}.
\end{cases}
\end{align*}
 It is not hard to check that due to the construction, constraints
\eqref{eq:slot-constraint} and \eqref{eq:idle-constraint} of $LP_{t+1}$
will be satisfied by $\hat{x}$. For example, for constraint \eqref{eq:idle-constraint}
and the case that $j_{1}$ and $j_{2}$ are both small on machine
$i$, 
\begin{align*}
\sum_{j}c_{i,j}^{d}\hat{x}_{i,j} & =\sum_{j\neq j_{0}}c_{i,j}^{d}x_{i,j}^{*}+c_{i,j_{0}}^{d}(x_{j_{1}}^{*}+x_{j_{2}}^{*})\\
 & =\sum_{j\neq j_{0}}c_{i,j}^{d}x_{i,j}^{*}+c_{i,j_{1}}^{d}x_{j_{1}}^{*}+c_{i,j_{2}}^{d}x_{j_{2}}^{*}\\
 & \le\rema^{d}(i).
\end{align*}

\end{proof}
Constraint \eqref{eq:total_assigned} may be violated by $j_{0}$
in the sense that 
$\sum_{i}\hat{x}_{i,j_{0}}+\sum_{s}\hat{x}_{s,j_{0}}$ may be larger
than one; however, in that case it suffices to scale uniformly down
the values $(x_{i,j_0})_{i \in M}$, $(x_{s,j_0})_{s \in S}$ to obtain a feasible solution. \qed
\begin{figure}
\begin{center}
\begin{tikzpicture}[level distance=1.5cm,stealth-]%
\tikzstyle{ajob}=[circle,draw=black]
\tikzstyle{rjob}=[circle,draw=black,fill=gray]
\tikzstyle{slot}=[rectangle,draw=black,minimum height=.3cm,minimum width=.3cm]
\tikzstyle{level 1}=[sibling distance=3cm]
\tikzstyle{level 2}=[sibling distance=2cm]
\tikzstyle{level 3}=[sibling distance=1cm]
 \node[ajob,label={right:$j$}] {} [grow=left] 
    child {node[slot] {} 
      child {node[ajob] {} 
      	child {node[slot] {} 
			child {node[ajob] {} 
				child {node[slot] {} 
					child {node[rjob] {} 
					}
					child {node[rjob] {} 
					}
				}
			}
			child {node[rjob] {} 
			}
		}
      }
      child {node[ajob] {} 
      	child {node[slot] {} 
			child {node[rjob] {} 
			}
			child {node[rjob] {} 
			}
		}
      }
    }
;
\end{tikzpicture}
\end{center}

\caption{\label{fig:tree}Example of the construction in Lemma \ref{lem:tree}.
Gray circles represent original jobs, white circles represent artificial
jobs, and squares represent slots.}
\end{figure}

\begin{proof}[of Lemma~\ref{lem:tree}]
To obtain the sets $J_{j}$ and $S_{j}$ we trace back, in the execution
of the algorithm, which jobs have been subsumed by $j$, which jobs
have been subsumed by those jobs, and so on. Associate to the
execution of the rounding algorithm a bipartite graph $G$ whose nodes
are the jobs and slots that have been removed by the algorithm. Whenever
two jobs $j_{1}$, $j_{2}$ competing for slot $s$ are subsumed by
an artificial job $j_{0}$, insert the arcs $(j_{1},s)$, $(j_{2},s)$,
$(s,j_{0})$ in $G$. The resulting graph will be a directed forest
in which every original job is a leaf, every slot has indegree two,
and every artificial job has indegree one (see Figure~\ref{fig:tree}
for an example). We define $J_{j}$ to be the leaves of the tree of
$G$ having $j$ as the root, while $S_{j}$ is the set of slot nodes
of the same tree. Properties (1)--(3) are now clear from the definition
of the algorithm. Properties (4) and (5) follow from the fact that
the slots in $S_{j}$ and jobs in $J_{j}$ have all been removed from
the linear program. Property (6) (reminiscent of the fact that, in
a tree in which every internal node has two children, the number of
internal nodes is equal to the number of leaves, minus one) is proved
by induction on the structure of the tree: consider a slot $s$ having
two jobs $j_{1},j_{2}\in J_{j}$ as children, find inductively a feasible
assignment for the smaller tree where $j_{1}$, $j_{2}$ and $s$
have been removed and the artificial job that is the parent of $s$
has been replaced with whichever of $j_{1},j_{2}$ fits $j_{0}$'s
parent, and assign the other job to $s$. \qed\end{proof}

\begin{proof}[of Proposition~\ref{prop:f(p)}]
Take $f(p,\eps) = (1+2^p)/(1+\eps)^p$. Then 
\begin{align*}
\left( \sum_{g \in G} g^p + (2 \min_{g \in G} g)^p \right)^{1/p} 
&\le \left( \norm[p]{g}^p  + 2^p \frac{\norm[p]{g}^p}{|G|} \right)^{1/p} \\
&= \left(\frac{1+2^p}{|G|}\right)^{1/p} \norm[p]{g} \\
&\le (1+\eps) \norm[p]{g}. 
\end{align*}
Hence, raising both sides to the power of $p$, we obtain the claim. \qed
\end{proof}

\begin{proof}[of Lemma~\ref{lem:load-difference}]
The claim will follow from the convexity of the objective function and an exchange
argument. 
Consider the first part of the claim; the second part is proved similarly. 
Let $i'$ be any machine of type $\ell$ where a job of length $c_{\max,\ell}$ is assigned, and let $a_0$, $a_1$ be the loads of machines $i$, $i'$ respectively; by construction $a_1 > c_{\max,\ell}$. Assume by contradiction that $a_0 < c_{\max,\ell}$. Consider the assignment where we exchange all the jobs on $i$ with the job on $i'$ of length $c_{\max,\ell}$. Since the machines are of the same type, the processing times of the jobs are unaffected. Call $a_0'$, $a_1'$ the new loads of $i$,$i'$. Observe that $a_0 < a_0' < a_1$ and $a_0 < a_1' < a_1$; equivalently, there exist $\mu,\eta \in (0,1)$ such that $a_0' = \mu a_0 + (1-\mu) a_1$, $a_1' = \eta a_1 + (1-\eta) a_0$. However, since $a_0'+a_1' = a_0+a_1$ and $a_0<a_1$, one has $\mu=\eta$. Then $(a_0',a_1') = \mu \cdot (a_0,a_1) + (1-\mu) \cdot (a_1,a_0)$. Consequently, if $a$ and $a'$ are the load vectors of all machines before and after the exchange, and $a''$ is the load vector obtained from $a$ by exchanging the $i$-th and $i'$-th component, 
$ a' = \mu a + (1-\mu) a''. $ 
Finally, from the strict convexity of the $L_p$ norm (for $1 < p <\infty$), 
$$\norm[p]{a'} = \norm[p]{ \mu a + (1-\mu) a'' } < \mu \norm[p]{a} + (1-\mu) \norm[p]{a''} = \norm[p]{a}, $$
which contradicts the assumed optimality of the solution. \qed
\end{proof}

\begin{lemma}
\label{lem:objective-not-increase}Let $T'$ be the set of all types 
$\ell$ for which there is still the constraint $\sum_{j\in H_{\ell}}x_{\ell,j}\le h_{\ell}$
in $LP_{q}$ and $LP_{q+1}$. Then 
\[
\sum_{\ell\in T'}\sum_{j\in H_{\ell}}x_{\ell,j}^{q}\cdot(c_{\ell,j})^{p}\ge\sum_{\ell\in T'}\sum_{j\in H_{\ell}}x_{\ell,j}^{q+1}\cdot(c_{\ell,j})^{p}
\]
 where $x^{q}$ and $x^{q+1}$ denote optimal solutions for $LP_{q}$
and $LP_{q+1}$, respectively.
\end{lemma}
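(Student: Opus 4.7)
The plan is to prove the stronger inequality $\mathcal O(x^q)\ge\mathcal O(x^{q+1})$, where $\mathcal O(x):=\sum_{\ell\in T}\sum_{j\in H_\ell}(c_{\ell,j})^p\,x_{\ell,j}$ is the full LP objective, and then to subtract the contribution of types $\ell\notin T'$, which is a common constant $C$ in both $LP_q$ and $LP_{q+1}$. The constant $C$ exists because a type $\ell$ leaves $T'$ only via an application of case~3 of Lemma~\ref{lem:sparse-solutions-Lp}, which pins every $x_{\ell,j}$ with $j\in H_\ell$ to a $\{0,1\}$ value; these pinned values are then inherited by every later LP. Hence the lemma is equivalent to $\mathcal O(x^q)\ge\mathcal O(x^{q+1})$.

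I would prove this by constructing, from $x^q$, a feasible solution $\hat x$ of $LP_{q+1}$ with $\mathcal O(\hat x)\le\mathcal O(x^q)$; by the optimality of $x^{q+1}$ this gives $\mathcal O(x^{q+1})\le\mathcal O(\hat x)\le\mathcal O(x^q)$. The construction is case-based, depending on which operation produced $LP_{q+1}$ from $LP_q$. If case~1 of Lemma~\ref{lem:sparse-solutions-Lp} applied (a makespan constraint was dropped), then $x^q$ is already feasible for $LP_{q+1}$ and carries the same objective value. If case~3 applied (the huge constraint for some type $\ell_0\notin T'$ was dropped after its at most two remaining fractional jobs were assigned to the improper machine), the affected jobs drop out of $LP_{q+1}$ altogether, and $x^q$ with their (nonnegative) contributions removed yields a feasible $\hat x$ whose $\mathcal O$-value is no larger.

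The subtle case is case~2, where two jobs $j_1,j_2$ on a slot $s_0$ are replaced by an artificial job $j_0$ and the slot's constraint is dropped. I would mimic the construction of Proposition~\ref{pro:remove-slot} and additionally set $\hat x_{\ell,j_0}:=x^q_{\ell,j_1}+x^q_{\ell,j_2}$ for every type $\ell$ with $j_0\in H_\ell$ (interpreting a missing $x^q_{\ell,j_k}$ as $0$), and then uniformly scale down the $j_0$-variables to restore the total-assignment constraint; the scaling only decreases $\mathcal O$. Because the algorithm's $c_{\ell,j_0}$ is a convex combination of $c_{\ell,j_1}$ and $c_{\ell,j_2}$, the convexity of $t\mapsto t^p$ (for $p>1$) gives
\[
(c_{\ell,j_0})^p\,(x^q_{\ell,j_1}+x^q_{\ell,j_2})\;\le\;(c_{\ell,j_1})^p\,x^q_{\ell,j_1}+(c_{\ell,j_2})^p\,x^q_{\ell,j_2}
\]
for each $\ell\in T$, so the contribution of $j_0$ in $\hat x$ is at most the combined contribution of $j_1,j_2$ in $x^q$, and summing over $\ell$ yields $\mathcal O(\hat x)\le\mathcal O(x^q)$. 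The main technical obstacle is precisely this case-2 bookkeeping: one must fix a convention under which $j_0\in H_\ell$ iff both $j_1,j_2\in H_\ell$ (or at least one of them), and verify that the algorithm's definition of $c_{\ell,j_0}$, inherited from Section~\ref{sec:Makespan-minimization}, can be taken as a convex combination of $c_{\ell,j_1}$ and $c_{\ell,j_2}$ with weights making the displayed inequality valid. Once this is done, subtracting $C$ from $\mathcal O(x^q)\ge\mathcal O(x^{q+1})$ recovers the stated $T'$-restricted inequality.
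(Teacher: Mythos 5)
Your proposal follows essentially the same route as the paper: show that each of the three rounding operations turns the optimal solution of $LP_q$ into a feasible solution of $LP_{q+1}$ of no larger objective value, then restrict to the types in $T'$. You are in fact more explicit than the paper on the one genuinely non-trivial point, namely that when $j_1,j_2$ are merged into $j_0$ the term $(c_{\ell,j_0})^p(x^q_{\ell,j_1}+x^q_{\ell,j_2})$ is bounded by $(c_{\ell,j_1})^p x^q_{\ell,j_1}+(c_{\ell,j_2})^p x^q_{\ell,j_2}$ via convexity of $t\mapsto t^p$ (the paper only cites the feasibility argument of Proposition~\ref{pro:remove-slot}). One small imprecision: for a type that leaves $T'$ exactly between $LP_q$ and $LP_{q+1}$, its contribution is \emph{not} the same constant in $x^q$ (where those variables are still free) as in $LP_{q+1}$ (where they are fixed by rounding); the clean fix is to note that the non-$T'$ contribution is constant across all feasible solutions of $LP_{q+1}$, so $x^{q+1}$ also minimizes the $T'$-restricted objective, which is then at most that of your $\hat x$ and hence of $x^q$.
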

\begin{proof}
According to the proof of Proposition~\ref{pro:remove-machine}, when in our iterative rounding routine we drop a constraint of type~\eqref{eq:makespan-bound} the optimal solution for $LP_{q}$ yields a feasible solution for $LP_{q+1}$. In particular, the value of the optimal solution does not increase.
By the proof of Proposition~\ref{pro:remove-slot} the same holds when we replace two jobs by an artificial job. Also, when we assign jobs to the huge machines of some type~$\ell$ then the remaining variable
assignment stays feasible. Hence, the optimal objective value for $LP_{q+1}$ is upper bounded by the optimal objective value for~$LP_{q}$. This implies the claim of the lemma. \qed
\end{proof}


\begin{proof}[of Lemma~\ref{lem:Lp-huge-cost}]
For the very huge jobs the contribution is clear. For the huge jobs
excluding the very huge jobs using Lemma~\ref{lem:objective-not-increase}
and the definition of our operation for the removing the constraints
from Inequality~\eqref{eq:huge-jobs} we get a bound of $(2\cdot\min\{t_{i}^{*}|i\in \topmach\})^{p}+\sum_{j}x_{\ell,j}^{*}\cdot(c_{\ell,j})^{p}$. \qed
\end{proof}

\begin{lemma}
\label{lem:combine}
For the loads $g_{i}$ of the computed solution it holds that
\[
\norm[p]{g}^{p}\le(1+3\epsilon)^p \, \mathrm{OPT}_{\mathrm{CP}} \le(1+3\epsilon)^p\, \mathrm{OPT}^p
\]
where $\mathrm{OPT}_{\mathrm{CP}}$ denotes the value of an optimal solution
of Slot-CP and $\mathrm{OPT}^p$ denotes the value of an optimal integral solution of Slot-CP.
\end{lemma}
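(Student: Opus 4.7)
The second inequality $\mathrm{OPT}_{\mathrm{CP}}\le\mathrm{OPT}^p$ is immediate, since Slot-CP is a continuous relaxation of the integer problem that remains after the (correct) enumeration of $h_\ell$, the very huge jobs, $c_{\max,\ell}$, $\alpha_\ell$, and the large-job patterns; any integral feasible schedule induces a feasible CP solution with the same objective.

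For the first inequality, my plan is to decompose
$$\norm[p]{g}^{p}=\sum_{i\in\allsmallmach}g_i^p+\sum_{\ell\in\types}\sum_{i\in\hugemach}g_i^p$$
and match each block against the corresponding block of the Slot-CP objective, picking up at most a factor of $(1+3\eps)^p$. For each non-huge machine $i\in\smallmach$ of type $\ell$, Lemma~\ref{lem:Lp-small-increase}, together with the fact that the pre-enumerated large-job contribution $B_i$ is unaffected by the rounding, gives $g_i^p\le(1+3\eps)^p(t_i^*+B_i)^p$; summing over $\allsmallmach$ reproduces the first CP term scaled by $(1+3\eps)^p$. For the huge machines of a fixed type $\ell$, Lemma~\ref{lem:Lp-huge-cost} delivers
$$\sum_{i\in\hugemach}g_i^p\le\sum_{i\in\topmach}(t_i^*)^p+\bigl(2\cdot\min\{t_i^*\mid i\in\topmach\}\bigr)^p+\sum_{j\in H_\ell}x_{\ell,j}^{*}(c_{\ell,j})^p,$$
and Lemma~\ref{lem:Lp-improper-machines} absorbs the middle term into a $(1+\eps)^p$ factor on the very-huge cost. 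Finally, iterating Lemma~\ref{lem:objective-not-increase} across all rounding steps and summing over $\ell$ bounds the residual LP term $\sum_{\ell}\sum_{j}x_{\ell,j}^{*}(c_{\ell,j})^p$ by its CP counterpart $\sum_{\ell}\sum_{j}x_{\ell,j}^{\mathrm{CP}}(c_{\ell,j})^p$.

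Collecting the two bounds and using $(1+\eps)^p\le(1+3\eps)^p$, the three contributions add up to exactly $(1+3\eps)^p\cdot\mathrm{OPT}_{\mathrm{CP}}$, which establishes the first inequality. The main obstacle is the simultaneous bookkeeping across the three rounding cases from Lemma~\ref{lem:sparse-solutions-Lp}: each step---dropping a constraint of type~\eqref{eq:makespan-bound}, merging two jobs into an artificial one and removing a slot constraint, or fixing a type's huge-job assignment while introducing an improper machine---must be shown not to inflate any of the three blocks of the CP objective. This is precisely what the combination of Lemmas~\ref{lem:objective-not-increase} and \ref{lem:Lp-improper-machines} provides: monotonicity of the rounded LP objective for the first two cases, and a per-type $(1+\eps)^p$ charge for the third.
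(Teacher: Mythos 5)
Your proposal is correct and follows essentially the same route as the paper: split the load vector into non-huge and huge machines, apply Lemma~\ref{lem:Lp-small-increase} to the former, Lemmas~\ref{lem:Lp-huge-cost} and~\ref{lem:Lp-improper-machines} to the latter, use Lemma~\ref{lem:objective-not-increase} to control the residual huge-job term, and observe that the relaxation property gives the second inequality. Your explicit tracking of the $B_i$ terms and of why the Slot-LP huge-job cost is dominated by its CP counterpart is, if anything, slightly more careful than the paper's telegraphic chain of inequalities; only the word ``exactly'' should read ``at most''.
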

\begin{proof}
Recall that the vector $t^{*}$ indicates the loads of the very
huge machines due to the guessed very huge jobs and the loads
of the small machines due to the allocation of the small jobs by Slot-CP.
Then

\begin{eqnarray*}
\norm[p]{g}^p & \le & 
\sum_{\ell \in \types}\sum_{i\in \smallmach}(t_{i}^{*}+3\epsilon\cdot\alpha_{\ell}\cdot c_{\max,\ell})^{p}+\sum_{\ell \in \types}(1+\epsilon)^{p}\sum_{i\in \topmach}(g_{i})^{p}+\sum_{\ell \in \types}\sum_{j}x_{\ell,j}^{*}\cdot(c_{\ell,j})^{p}
\\
 & \le & 
 \sum_{\ell \in \types}\sum_{i\in \smallmach}(t_{i}^{*}+3\epsilon\cdot t_{i}^{*})^{p}+\sum_{\ell \in \types}(1+\epsilon)^{p}\sum_{i\in \topmach}(g_{i})^{p}+\sum_{\ell \in \types}\sum_{j}x_{\ell,j}^{*}\cdot(c_{\ell,j})^{p}\\
 & \le & (1+3\epsilon)^p \left(
 \sum_{\ell \in \types}\sum_{i\in \smallmach}t_{i}^{*}+\sum_{\ell \in \types}\sum_{i\in \topmach}(t_{i}^{*})^{p}+\sum_{\ell \in \types}\sum_{j}x_{\ell,j}^{*}\cdot(c_{\ell,j})^{p} \right) 
 \\ 
 & \le & (1+3\epsilon)^p \, \mathrm{OPT}_{\mathrm{CP}}\\
 & \le & (1+3\epsilon)^p \, \mathrm{OPT}^p. 
\end{eqnarray*}
\qed
\end{proof}

\end{document}